\documentclass[final,5p,times,fleqn,twocolumn,numbers]{elsarticle}
\usepackage{amsthm} 
\usepackage[intlimits,fleqn]{amsmath}    
\usepackage[T1]{fontenc}
\usepackage[utf8]{inputenc}

\usepackage{graphicx}
\usepackage{textcomp}
\usepackage{epsfig}
\usepackage{bm}
\usepackage{amssymb}
\usepackage{url}
\usepackage{enumitem} 
\usepackage{multirow}
\usepackage{hhline}
\usepackage{booktabs}
\usepackage{mathtools}
\usepackage{makecell}
\usepackage[linesnumbered,boxed,commentsnumbered,ruled,vlined,longend]{algorithm2e}
\usepackage{comment}

\makeatother

\newtheorem{mydef}{Definition}

\newtheorem{asmp}{Assumption}

\newtheorem{myprs}{Proposition}

\newtheorem{problem}{Problem}



\usepackage{stackengine}

\newcommand{\bmat}[1]{\begin{bmatrix} #1 \end{bmatrix}}

\def\comment{\textcolor{red}}

\newcommand{\st}{{\rm s.t.}}

\newcommand{\m}{\boldsymbol}
\allowdisplaybreaks[4]
\usepackage[colorlinks = true,
linkcolor = blue,
urlcolor  = black,
citecolor = blue,
anchorcolor = blue]{hyperref}

\newcommand{\mc}[1]{\mathcal{#1}}
\newcommand{\mbb}[1]{\mathbb{#1}}
\newcommand{\mr}[1]{\mathrm{#1}}
\usepackage[framemethod=TikZ]{mdframed}
\mdfdefinestyle{MyFrame}{%
linecolor=black,
outerlinewidth=1.25pt,
roundcorner=1.25pt,
innerrightmargin=5pt,
innerleftmargin=5pt,}

\usepackage[noabbrev]{cleveref}

\DeclarePairedDelimiter\abs{\lvert}{\rvert}%
\DeclarePairedDelimiter\norm{\lVert}{\rVert}%

\makeatletter
\let\oldabs\abs
\def\abs{\@ifstar{\oldabs}{\oldabs*}}
\let\oldnorm\norm
\def\norm{\@ifstar{\oldnorm}{\oldnorm*}}
\makeatother

\usepackage[english]{babel}
\usepackage[super]{nth}

\usepackage{float}
\usepackage[caption = false]{subfig}

\usepackage{array}
\usepackage{threeparttable}

\SetKwRepeat{Do}{do}{while}%

\usepackage{lettrine}
\usepackage{balance}

\usepackage{titlesec}
\titlespacing*{\section}{2pt}{*0.7}{*0.5}
\titlespacing*{\subsection}{2pt}{*0.7}{*0.3}
\titlespacing*{\subsubsection}{2pt}{*0.7}{*0.3}
\AtBeginDocument{%
  \setlength{\abovedisplayskip}{3.7pt}%
  \setlength{\belowdisplayskip}{3.7pt}%
  \setlength{\abovedisplayshortskip}{3.5pt}%
  \setlength{\belowdisplayshortskip}{3.5pt}%
  \setlength{\textfloatsep}{2pt plus 2pt minus 2pt}%
  \setlength{\floatsep}{2pt plus 1pt minus 1pt}%
  \setlength{\intextsep}{2pt plus 1pt minus 1pt}%
  \setlength{\abovecaptionskip}{2pt}%
  \setlength{\belowcaptionskip}{2pt}%
}


\usepackage{tikz}
\usetikzlibrary{shapes.geometric, arrows.meta, decorations.pathmorphing, decorations.markings, calc, 3d}
\usetikzlibrary{positioning, fit, backgrounds}
\tikzset{
tangent/.style={decoration={
markings,mark=at position #1 with {
\coordinate (ta) at (0,0);
\coordinate (tb) at (0.1,0);
}
},postaction=decorate},
tangent/.default=0.5
}

\newcommand{\parstart}[1]{\noindent \textbf{#1.}\;}

\newcommand{\R}{\mathbb{R}}
\newcommand{\Rn}[1]{\mathbb{R}^{#1}}

\newcommand{\titlesc}[1]{\title{\Large \vspace{0.7cm} \LARGE \centering {\textsc{{#1}}}}}

\usepackage{caption}
\usepackage{etoolbox}


\AtBeginEnvironment{thebibliography}{%
  \small
  \setlength{\itemsep}{0pt}
  \setlength{\parskip}{-5pt}
  \setlength{\parsep}{-5pt}
  \renewcommand{\url}[1]{}%
}

\patchcmd{\thebibliography}{\small}{\small\setlength{\itemsep}{0pt}\setlength{\parskip}{0pt}}{}{}

\usepackage{lettrine}

\usepackage{amssymb}
\usepackage{amsthm}

\usepackage{balance} 

\def\comment#1{#1}

\usepackage{fancyhdr}
\newcommand{\InPressBanner}{\textsc{Advances in Water Resources, In Press, June 2026}}
\fancypagestyle{inpress}{%
  \fancyhf{}%
  \fancyhead[C]{\footnotesize\InPressBanner}%
  \fancyfoot[C]{\thepage}%
}
\fancypagestyle{pprintTitle}{%
  \fancyhf{}%
  \fancyhead[C]{\footnotesize\InPressBanner}%
  \fancyfoot[C]{\thepage}%
}
\AtBeginDocument{\pagestyle{inpress}}

\begin{document}

\newdimen\origiwspc%
\newdimen\origiwstr%
\origiwspc=\fontdimen2\font
\origiwstr=\fontdimen3\font

\begin{frontmatter}
\titlesc{Exploring Uncertainty Propagation in Coupled Hydrologic and Hydrodynamic Systems via Distribution-Agnostic State Space Analysis}

\affiliation[1]{organization={Vanderbilt University, Department of Civil and Environmental Engineering},
addressline={24th Avenue South}, 
city={Nashville}, 
postcode={37235}, 
state={TN},
country={USA}}

\author[1]{Mohamad H. Kazma\corref{cor1}}
\ead{mohamad.h.kazma@vanderbilt.edu}

\author[1]{Ahmad F. Taha}
\ead{ahmad.taha@vanderbilt.edu}

\cortext[cor1]{\scriptsize Corresponding author.}

\tnotetext[t1]{This work is supported by National Science Foundation under Grants 2152450.}

\begin{abstract}
Accurate overland runoff and infiltration predictions are critical for effective water resources management, in particular for urban flood management. However, the inherent uncertainty in rainfall patterns, soil properties, and initial conditions makes reliable flood forecasting a challenging task. This paper presents a framework for quantifying the impact of these uncertainties on hydrologic and hydrodynamic simulations via a state space approach based on a differential algebraic equation (DAE) formulation that couples surface and subsurface constraints with the governing dynamics. Under this formulation, the complex interactions between overland flow and infiltration dynamics are captured in realtime. To account for uncertainty in inputs and parameters, the proposed framework quantifies and propagates these uncertainties through the DAE model formulation under partial measurements. The effectiveness of the approach is demonstrated through a series of numerical experiments on synthetic and real world catchments, highlighting its ability to provide probabilistic estimates of watershed state conditions while accounting for uncertainty. An important aspect of the proposed methods is that they are distribution-agnostic, i.e., they only require covariances of uncertainty and not specific types of distributions. The proposed framework is further validated against Monte Carlo (MC) ensemble simulations while providing probabilistic state estimates for measured and unmeasured watershed states under partial gauging.
\end{abstract}

\begin{keyword}
Differential algebraic equations\sep uncertainty quantification\sep probabilistic state estimation\sep hydrologic-hydrodynamic modeling\sep watershed monitoring \vspace{-0.3cm}
\end{keyword}

\end{frontmatter}
\vspace{-0.3cm}

\section{Introduction and paper contributions}\label{sec:Into-Lit}
\vspace{0.1cm}
\lettrine{M}{ONITORING} of watersheds is critical towards the understanding of hydrological processes and the management of water resources, in particular predicting flood hazards and erosion control. Due to limited spatial monitoring and the nonlinear and time-varying nature of watershed response to rainfall event driven dynamics, hydrologic and hydrodynamic models are used for simulating and predicting watershed response to such events~\cite{Jiang2019}. The simulation of such models can \textit{(i)} be based on several coupling strategies such as internal or external coupling of surface and subsurface flow dynamics~\cite{Zhang2025g}; \textit{(ii)} mathematically represent complex shallow water formulations as dynamic wave, kinematic wave, or local inertia approximations~\cite{Kim2012a,Gomes2023,Cozzolino2019}; and \textit{(iii)} consider implementing either 1D or 2D simulations in EPA’s Storm Water Management (SWMM)~\cite{rossman2010storm}, HydroPol2D~\cite{Gomes2023}, Weighted Cellular Automata 2D (WCA2D)~\cite{Guidolin2016}, and the Hydrologic Engineering Center’s Hydrologic Modeling System and River Analysis System (HEC-HMS/RAS)~\cite{Artiglieri2025,Brunner2016}. Readers can refer to~\cite{Gomes2023} for the temporal and spatial scale that can be simulated in each of the models in the literature. The modeling choices enable computationally efficient simulation of the complex physics involved in a watershed hydrologic and hydrodynamic response. However, the accuracy of these models is often limited by uncertainty arising in rainfall forecasts, underlying land and soil properties, initial conditions, and prevalence of ungauged areas. This can degrade realtime forecast reliability and, in turn, limit operational decisions for the control and management of water resources. Reliable monitoring and prediction of watershed conditions under uncertainty therefore become critical for reliable forecasting of a watershed's hydrodynamic response used to inform decisions in water resources management.\vspace{-0.1cm}

Motivated by increasing forecast variability and imperfect knowledge of soil and land/surface parameters due to exacerbating climate conditions~\cite{Chen2015a}, the objective of this paper is to introduce a state space representation of a coupled hydrologic-hydrodynamic watershed model, while providing a systems theoretic framework for realtime probabilistic monitoring of watershed conditions. By representing the dynamics as a coupled set of differential and algebraic equations, we obtain a form amenable to estimation and uncertainty propagation using control theoretic methods available in the literature, thereby providing a framework to quantify how rainfall and land/soil parameter uncertainty impacts predicted watershed states and outputs under limited watershed gauging.

\vspace{-0.15cm}
\subsection{Relevant literature}\label{subsec:literature}
Watershed modeling literature, with the advent of increasing computational power and data availability, has moved toward more distributed and coupled representations of rainfall-runoff-infiltration processes over 2D terrains. Model selection depends on watershed scale, data availability, and operational objective~\cite{Chen2015a,Borah2011,Li2018b,Shen2023a,Zhang2025g}. In practice, 2D hydrologic-hydrodynamic models are often used for urban and small watershed flood applications, whereas 1D or semi distributed models are more common for larger basins~\cite{Kim2012a,Liu2019a,Shu2024,Gomes2025}; however, 2D formulations increase state and parameter dimensions and thus computational demand. Accordingly, the intended application, such as flood forecasting, water quality modeling, or ecosystem management, determines the required model complexity, coupling strategy, and validation level~\cite{DarenHarmel2007}. Furthermore, it is essential to account for uncertainty arising from \textit{(i)} meteorological rainfall forcing and \textit{(ii)} distributed watershed parameters. It is noted that a major source of discrepancy in watershed outputs, including runoff timing and peak magnitudes~\cite{Patil2012,McCuen2009,Melching1990}, is due to input uncertainty. Recent studies have investigated spatiotemporal rainfall uncertainty and its direct impact on flood hazard forecasts in semiarid and climate sensitive basins~\cite{Tsvetkova2019,Xiaojun2021,Tang2023,Gomes2025}. Parameter uncertainty in infiltration and surface roughness has also been shown to result in larger variability in watershed conditions~\cite{Gupta2019,Gupta2023,Wiebe2025}.

That being said, methods that quantify or propagate uncertainty in watershed hydrologic-hydrodynamic models include information theoretic methods, Bayesian inference, generalized likelihood uncertainty estimation, and Monte Carlo (MC) sampling~\cite{Beven2001,Gupta2023}. These methods have been applied to a range of hydrologic models, including lumped, semi distributed, and fully distributed formulations~\cite{Gupta2023}. For instance, a likelihood based approach is used in watershed water quality modeling because it links model simulations to uncertain observations, allowing the identification of behavioral parameter sets and the quantification of how input and parameter uncertainty propagate to watershed states and outputs, including flow and pollutant concentrations~\cite{Jia2008,Freni2009}. However, the aforementioned approach does not directly provide a realtime uncertainty map for measured and unmeasured states under poor and partial measurement conditions. Furthermore, Bayesian inference methods using Markov Chain MC sampling have been considered to estimate posterior distributions of model parameters and states given uncertain observations and inputs~\cite{Cibin2014}. Probabilistic flood mapping studies using MC simulations provide a practical approach for providing confidence aware flood forecasts under uncertain rainfall and soil parameters~\cite{Moges2020,Nguyen2024a,Gomes2025,Castro2025}; however, numerous model realizations are required to characterize uncertainty bounds, in particular for large distributed watershed models.

Several studies have shown that Kalman filter estimation~\cite{Kalman1960} and its variants can be used for realtime state estimation and forecasting under measurement uncertainty in hydrologic and hydrodynamic models. Ensemble Kalman smoother and multivariate ensemble Kalman filter (EnKF) studies show that uncertainty in hydrologic model parameters directly affects watershed condition estimates~\cite{Lei2014,Shi2015}. An unscented Kalman filter (UKF) for rainfall-runoff models is shown to improve state reconstruction and output trajectories in synthetic and real basins~\cite{Jiang2019}. In spatially distributed watershed models, a partitioned EnKF update scheme is proposed while accounting for high dimensional augmentation~\cite{Xie2013a}. A realtime EnKF based state parameter updating is also shown to improve predictions at ungauged locations when downstream observations are incorporated~\cite{Xie2014}. A stormwater digital twin with extended Kalman filtering and online sensor quality optimization is proposed for forecasting~\cite{Kim2025}. A physics guided sensor to model EnKF framework is developed for realtime soil moisture estimation and forecasting~\cite{Zhao2026}. However, from a systems theoretic perspective, reliable state estimation with EKF/EnKF requires a measurement configuration that renders the watershed model observable 
or at least detectable; otherwise, uncertainty mapping for unmeasured states can become unreliable under partial sensing. This requires introducing a framework for hydrologic and hydrodynamic modeling, that presents an explicit state space representation used to analyze observability and support reliable realtime monitoring.

On this basis, some studies have introduced state space models and sensor to model assimilation frameworks in stormwater systems. A state space 1D watershed reservoir channel model for reactive and predictive valve control is introduced in~\cite{GomesJunior2022}. A graph theoretic placement of control structures to systematically reduce hydrograph peaks at the watershed scale is developed in~\cite{Bartos2019}. A digital twin framework is formulated in~\cite{Bartos2021a}, representing the Saint Venant dynamics in state space form combined with extended Kalman filtering and model predictive control for real time operations~\cite{Bartos2021a}. The studies show the potential of considering state space modeling to improve the reliability of watershed monitoring.
\begin{figure*}[t]
    \centering
\includegraphics[keepaspectratio=true,width=\textwidth]{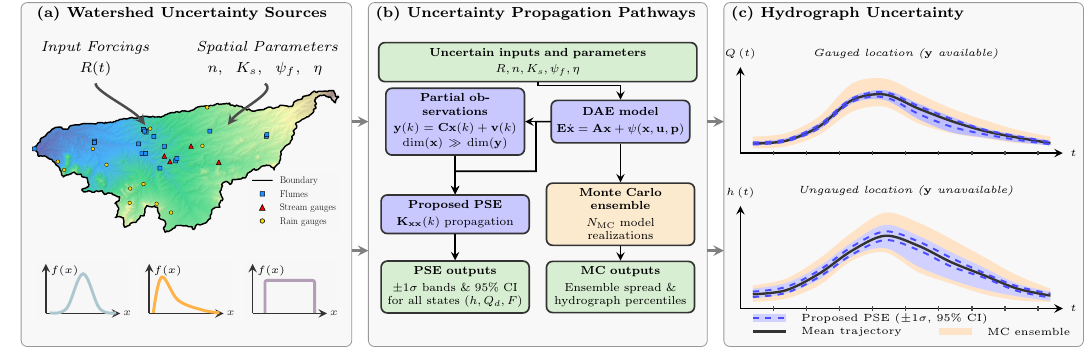}
\vspace{0.05cm}
    \caption{Conceptual framework representing the proposed realtime covariance propagation framework for 2D overland flow and infiltration watershed modeling under uncertainty; (a) illustrates the watershed sensing configuration with flumes, stream gauges, and rain gauges, together with the five uncertainty sources and their respective probabilistic distribution models; (b) shows the two propagation pathways: the proposed realtime covariance propagation framework under partial measurements and an MC ensemble pathway used as reference; and (c) compares resulting uncertainty intervals at gauged and ungauged locations, illustrating that the proposed framework provides tighter measurement informed confidence intervals at monitored locations while still providing uncertainty estimates at unmeasured locations.}\label{fig:framework_outcome_uncertainty}
    \vspace{-0.3cm}
\end{figure*}
\vspace{-0.15cm}
\subsection{Research gaps, novelty and contributions}\label{subsec:contributions}
The aforementioned literature provides essential advances in watershed modeling, uncertainty analysis, and realtime estimation; however, it is not void of limitations. While MC based uncertainty studies can provide confidence intervals on system conditions, the number of model realizations required limits scalability for realtime forecasting in distributed watershed models~\cite{Moges2020,Nguyen2024a,Gomes2025,Castro2025}. Kalman based methods improve state estimation and forecasting; however, reliable estimation also requires sufficient observability under the available measurements~\cite{Jiang2019,Xie2013a,Xie2014,Lei2014,Shi2015,Kim2025,Zhao2026}. Furthermore, the current systems oriented watershed literature does not provide a unified coupled 2D hydrologic-hydrodynamic state space framework~\cite{GomesJunior2022,Bartos2019,Bartos2021a}.
To that end, we introduce a coupled 2D state space representation of the hydrologic and hydrodynamic processes and develop a framework for realtime probabilistic watershed monitoring under partial watershed gauging. In this formulation, the underlying overland flow and infiltration processes are modeled as a nonlinear system of differential algebraic equations (DAEs).
The main contributions are as follows.
\begin{itemize}[leftmargin=*]
\item We introduce a coupled state space framework for 2D hydrologic and hydrodynamic dynamics, specifically overland flow routing and Green-Ampt infiltration, under a diffusive wave formulation. The framework preserves constrained flow interactions through algebraic relations while retaining the distributed spatiotemporal dynamics of overland flow and infiltration. Furthermore, we show that the resulting semi-explicit differential algebraic model is index one and regular. This implies that we obtain locally unique solutions for the system state variables under implicit discretization formulations and consistent initial conditions.\vspace{-0.25cm}
\item We develop a realtime probabilistic monitoring framework by embedding state estimation and uncertainty propagation within the proposed DAE model under partial measurements. Under this formulation, uncertainty in forcing and soil parameters is mapped to confidence aware estimates of watershed conditions in partially gauged settings. \comment{The covariance propagation of the proposed framework is distribution-agnostic;} it requires only the means and variances of uncertain inputs and parameters, without assuming a specific probability distribution. \comment{A distributional choice is introduced only at the confidence interval computation in~\eqref{eq:normal_propagation}, where the propagated covariance is mapped to confidence intervals using the adopted distribution model for each variable.} This is in contrast to MC and Bayesian methods that require sampling from a defined distribution. \vspace{-0.25cm}
\item A comprehensive case study is provided on benchmark and real watershed settings, including V-Tilted and Walnut Gulch Experimental Watershed (WGEW), to evaluate the proposed framework under varying rainfall and soil uncertainty scenarios with partial sensing and to demonstrate scalability from synthetic to real world distributed watershed conditions. MC uncertainty envelopes are generated using HydroPol2D cellular automata and local inertia formulations, and are used as references to assess estimation quality and computational scalability.
\end{itemize}

\parstart{Paper Notation} Let $\mathbb{N}$, $\R$, and $\R_{\geq 0}$ denote the set of natural, real, and nonnegative real numbers. Let $\Rn{n}$ and $\Rn{n\times m}$ denote the set of real-valued column vectors of size $n$, and $n$-by-$m$ real matrices. The cardinality of a set $\mc{S}$ is denoted by $|\mc{S}|$. For a matrix $\m A$, $\mr{det}(\m A)$, $\mr{trace}(\m A)$, and $\mr{rank}(\m A)$ denote determinant, trace, and rank, respectively. The Moore-Penrose pseudoinverse of $\m A$ is denoted by $\m A^\dagger$. The operator $\mr{sgn}(\cdot)$ denotes the sign function. The operator $\mr{diag}\left(\{\gamma_j\}_{j=1}^{n}\right)\in\Rn{n\times n}$ constructs a diagonal matrix from $\m\gamma=[\gamma_1,\ldots,\gamma_n]^\top\in\Rn{n}$. The stacked vector notation $\{\m x_i\}_{i=0}^{N}\in\Rn{(N+1)n}$ denotes column concatenation of vectors $\m x_i\in\Rn{n}$ for $i\in\{0,\ldots,N\}$. Variable $k\in\mbb{N}$ denotes the discrete-time index. The observation horizon is denoted by $N_T\in\mbb{N}$. Perturbations are denoted by $\delta(\cdot)$ and nominal trajectories by superscript $(\cdot)^0$. For random vectors $\m a,\m b$, the cross-covariance matrix is denoted by $\mr{K}_{\m a\m b}:=\mr{Cov}(\m a,\m b)$, and $\mr{K}_{\m a\m a}$ denotes the covariance matrix. For a random vector $\m x=[x_1,\ldots,x_n]^\top\in\Rn{n}$, define $\mbb{E}(\m x):=[\mbb{E}(x_1),\ldots,\mbb{E}(x_n)]^\top$, $\mr{Var}(\m x):=[\mr{Var}(x_1),\ldots,\mr{Var}(x_n)]^\top$, and $\mr{K}_{\m x\m x}$ as the covariance matrix of $\m x$. The standard deviation of a scalar random variable $x_j$ is denoted by $\sigma_{x_j}:=\sqrt{\mr{Var}(x_j)}$; for a state variable $j$, $\sigma_j(k):=\sqrt{[\mr{K}_{\m x\m x}(k)]_{jj}}$ denotes the propagated standard deviation at time step $k$. The shorthand covariance operation over a linear relation $\m A\m x=\m b$ is written as $\mr{Cov}(\m A\m x=\m b)$, i.e., $\mr{Cov}(\m A\m x,\m A\m x)=\mr{Cov}(\m b,\m b)$. 

\parstart{Paper Organization}
The paper is organized as follows. Section~\ref{sec:methods} introduces the hydrological and hydrodynamic model considered, presents a state space representation of the coupled dynamics, and describes the proposed framework for quantifying uncertainty in watershed conditions under uncertain inputs and parameters. Section~\ref{sec:study_areas} describes the study areas considered and the datasets used. Section~\ref{sec:results} reports the main results. Finally, Section~\ref{sec:summary} concludes the manuscript and discusses limitations and directions for future work.

\section{Model and methods}\label{sec:methods}
\subsection{Hydrologic-hydrodynamic model}\label{subsec:hydropol2d}
We consider the coupled hydrologic and hydrodynamic dynamics adopted in HydroPol2D~\cite{Gomes2023} over a 2D domain $\mc{D}\subset\mbb{R}^2$. The model couples rainfall and runoff generation, Green-Ampt infiltration, and directional Manning routing. The input maps of topography, land cover, and soil type define the distributed parameters used by these components. In this section, we introduce the governing equations, and in Section~\ref{subsec:dae} we introduce the coupled system rewritten as a set of differential algebraic equations. The original HydroPol2D explicit cellular routing and local inertia formulations are used in the case studies section as baselines for the proposed methods~\cite{Gomes2023,Gomes2024a}. The conservation of mass under the diffusive wave approximation is given by~\eqref{eq:mass_balance}-\eqref{eq:inflow_outflow_local}, and the Green-Ampt infiltration model is given by~\eqref{eq:green_ampt_capacity}-\eqref{eq:infiltration_rate}. The directional discharge relation based on Manning's equation is given by~\eqref{eq:manning}.

\subsubsection{Conservation of mass and momentum}\label{subsubsec:mass_balance}
The domain $\mathcal{D} \subset \mathbb{R}^2$ represents a spatial discretization of a surface into grid cells, a central cell with four neighbors, forming a graph $\mathcal{G} = (\mathcal{K}, \mathcal{E})$. The cell set is denoted as $\mc{K}=\{(i,j):1\le i\le N_x,\;1\le j\le N_y\}$ with cardinality $K=N_x\times N_y$. Let cell $(i,j)$ denote a generic grid location, then $\mc{N}_{i,j}$ is the set of its neighboring cells (von Neumann neighborhood), and $\mc{D}_{\mr{dir}}=\{\mr{L},\mr{R},\mr{U},\mr{D}\}$ are its four cardinal directions. Then, for each cell $(i,j)$ with area $A_{i,j}=\Delta x\Delta y$ [m$^2$], the local rate of change in surface water depth $h^{i,j}(t)$ [m] is written as
\begin{equation}\label{eq:mass_balance}
\dot{h}^{i,j}(t)=R^{i,j}(t)-f^{i,j}(t)+\frac{1}{A_{i,j}}\left(Q_{\mr{in}}^{i,j}(t)-Q_{\mr{out}}^{i,j}(t)\right),
\end{equation}
where $R^{i,j}(t)$ is the rainfall intensity [m s$^{-1}$], $f^{i,j}(t):=f^{i,j}\!\left(h^{i,j}(t),F^{i,j}(t)\right)$ is the infiltration rate [m s$^{-1}$] that depends on the local ponding depth $h^{i,j}(t)$ and cumulative infiltration depth state $F^{i,j}(t)$ [m] (see~\eqref{eq:green_ampt_capacity}-\eqref{eq:infiltration_rate}), and the inflow $Q_{\mr{in}}^{i,j}(t)$ and outflow rate $Q_{\mr{out}}^{i,j}(t)$ [m$^3$ s$^{-1}$] satisfy
\begin{equation}\label{eq:inflow_outflow_local}
Q_{\mr{in}}^{i,j}(t)=\sum_{(r,s)\in\mc{N}_{i,j}}Q_{(r,s)\rightarrow(i,j)}(t),\qquad
Q_{\mr{out}}^{i,j}(t)=\sum_{d\in\mc{D}_{\mr{dir}}}Q_{d}^{i,j}(t),
\end{equation}
where the inflow rate $Q_{\mr{in}}^{i,j}(t)$ is the sum of directional discharges entering cell $(i,j)$ from its von Neumann neighboring cells, while the outflow rate $Q_{\mr{out}}^{i,j}(t)$ is the sum of directional discharges leaving cell $(i,j)$ toward its neighbors along each cardinal direction. Equation~\eqref{eq:mass_balance} then states that water depth dynamics at each cell are governed by rainfall input, infiltration loss, and net lateral flux exchange with neighboring cells. A positive right hand side implies local ponding growth, whereas a negative right hand side implies recession. Note that evapotranspiration is typically included in this formulation~\cite{rossman2010storm}; however, in this manuscript we neglect evapotranspiration since the focus herein is on short duration wet weather events, where rainfall, infiltration, and lateral routing dominate the water balance~\cite{Gomes2023}. Infiltration of water into the soil is represented using the Green-Ampt model, which is introduced next.

\subsubsection{Green-Ampt infiltration model}\label{subsubsec:green_ampt}
Infiltration is modeled using the Green-Ampt model~\cite{Green1911Studies}. The model represents infiltration as a function of ponding depth and cumulative infiltration, and is considered a simplified approximation of Richards unsaturated flow dynamics~\cite{richards1931capillary}. The infiltration capacity at cell $(i,j)$ is thus given by
\begin{equation}\label{eq:green_ampt_capacity}
f_{\mr{cap}}^{i,j}(t)=K_s^{i,j}\left[1+\frac{\Delta\theta^{i,j}\left(\psi_f^{i,j}+h^{i,j}(t)\right)}{\max\left(F^{i,j}(t),F_{\min}\right)}\right],
\end{equation}
where $K_s^{i,j}$ is saturated hydraulic conductivity [m s$^{-1}$], $\psi_f^{i,j}$ is wetting front suction head [m], $\Delta\theta^{i,j}=\theta_s^{i,j}-\theta_i^{i,j}$ is the moisture deficit [-], $F_{\min}>0$ is a small regularization threshold [m]. The actual infiltration rate is limited by water availability accumulated over a time step $\Delta t$ and is given by
\begin{equation}\label{eq:infiltration_rate}
f^{i,j}(t):= \dot{F}^{i,j}(t) = \min\left(f_{\mr{cap}}^{i,j}(t),\;R^{i,j}(t)+\frac{h^{i,j}(t)}{\Delta t}\right).
\end{equation}
Accordingly, Equation~\eqref{eq:green_ampt_capacity} states that infiltration capacity is composed of \textit{(i)} the conductivity term $K_s^{i,j}$ and \textit{(ii)} a capillary term that depends on $\Delta\theta^{i,j}\left(\psi_f^{i,j}+h^{i,j}(t)\right)/\max\left(F^{i,j}(t),F_{\min}\right)$. For small $F^{i,j}(t)$, the capillary term is larger and the infiltration capacity is higher; as $F^{i,j}(t)$ increases, the capillary contribution decreases and $f_{\mr{cap}}^{i,j}(t)$ approaches $K_s^{i,j}$. Equation~\eqref{eq:infiltration_rate} enforces water availability by limiting infiltration to the minimum between the soil capacity and the available surface water rate $R^{i,j}(t)+h^{i,j}(t)/\Delta t$; this prevents unrealistic infiltration that exceeds local water availability. \comment{We note that the Green-Ampt model used in~\eqref{eq:green_ampt_capacity}-\eqref{eq:infiltration_rate} represents infiltration through a sharp wetting front advancing in a homogeneous soil column, parameterized by $K_s^{i,j}$, $\psi_f^{i,j}$, and $\Delta\theta^{i,j}$. Several hydrologic behaviors are therefore not captured by this representation, including layered soil profiles, macropore and preferential flow, wetting and drying hysteresis, soil crusting and time-varying surface conductivity~\cite{Becker2018}, and lateral subsurface flow in the unsaturated zone~\cite{richards1931capillary,Gomes2023,Gomes2024a}.}

\subsubsection{Manning's equation and outflow routing}\label{subsubsec:manning}
The directional outflow from each cell is obtained from Manning's relation combined with cellular automata directional weights. For cell $(i,j)$ and direction $d\in\mc{D}_{\mr{dir}}$,
\begin{equation}\label{eq:manning}
Q_d^{i,j}(t)=\frac{w_d^{i,j}(t)}{n^{i,j}}\,\bigl(h^{i,j}(t)\bigr)^{5/3}\left|\frac{\Delta\eta_d^{i,j}(t)}{\Delta s_d}\right|^{1/2}\mr{sgn}\bigl(\Delta\eta_d^{i,j}(t)\bigr)\,\ell_d,
\end{equation}
where $\Delta\eta_d^{i,j}=(z^{i,j}+h^{i,j})-(z_d^{i,j}+h_d^{i,j})$ is the water surface elevation difference [m], $z^{i,j}$ is the bed elevation at cell $(i,j)$ [m], $z_d^{i,j}$ is the bed elevation of the neighboring cell in direction $d$ [m], $h_d^{i,j}$ is the neighboring water depth in direction $d$ [m], $n^{i,j}$ is Manning roughness coefficient [s m$^{-1/3}$]~\cite{chowv}, $\Delta s_d$ is the directional distance [m], and $\ell_d$ is the interface width [m] ($\Delta y$ for left/right and $\Delta x$ for up/down). The weights satisfy $w_d^{i,j}\in[0,1]$ and $\sum_{d\in\mc{D}_{\mr{dir}}}w_d^{i,j}=1$, allowing the distribution of flow to multiple downslope neighbors based on the weights~\cite{Guidolin2016}. 

Following HydroPol2D's original implementation~\cite{Gomes2023}, the nonlinear Manning relation is evaluated at each time step for the steepest water surface direction and then redistributed to the remaining admissible directions using cellular automata weights based on available neighbor void volumes~\cite{Guidolin2016}. In this hydrodynamic model, the diffusive wave assumption is adopted and the friction slope is approximated by the water surface (energy grade) slope~\cite{vieira1983conditions,Cozzolino2019}. While local inertia approximations can be more accurate in some flood simulation settings, their applicability is restricted by limiting conditions in certain flooding regimes and topographic settings~\cite{Cozzolino2019}. Accordingly, directional discharges are computed from local water surface gradients and Manning resistance, while the local and convective acceleration terms of the full shallow water momentum equation are neglected. \comment{Note that the diffusive wave approximation is consistent with subcritical regimes ($\mr{Fr}=u/\sqrt{gh}<1$) under mild to moderate slopes and slowly varying overland flow, where local and convective acceleration terms in the shallow water momentum equation are small relative to the gravity and friction terms~\cite{vieira1983conditions,Cozzolino2019}. These conditions are consistent with the topographic and runoff regimes of the ephemeral channels of Walnut Gulch considered in Section~\ref{subsec:results_solver}; the close agreement between the DAE outlet hydrograph and the HydroPol2D local inertia baseline in the case studies provides empirical evidence that inertial terms are small in these settings.}
\subsection{The dynamics in a state space representation}\label{subsec:dae}
We formulate the coupled hydrologic-hydrodynamic model, encompassing overland flow and Green-Ampt infiltration dynamics, as a semi-explicit nonlinear DAE. For a catchment domain $\mc{D} \subset \mbb{R}^2$, represented by the structured cell graph $\mc{G} = (\mc{K},\mc{E})$ with cell set $\mc{K}$, a directed edge $\big((i,j),(r,s)\big) \in \mc{E} \subseteq \mc{K} \times \mc{K}$ denotes an admissible flow connection from the neighboring cell $(i,j)$ to $(r,s)$. The boundary, outlet, and interior cell sets are $\mc{K}_{\partial}\subset\mc{K}$, $\mc{K}_{\mr{out}}\subseteq\mc{K}_{\partial}$, and $\mc{K}_{\mr{int}}=\mc{K}\setminus\mc{K}_{\partial}$, respectively. Each cell $(i,j)\in\mc{K}$ is characterized by area $A_{i,j}:=\Delta x\,\Delta y$ [m$^2$], elevation $z_{i,j}$ [m], and Manning roughness $n_{i,j}$ [$\mr{s\,m^{-1/3}}$].
The coupled hydrologic-hydrodynamic dynamics over a 2D catchment domain can be rewritten as a nonlinear semi-explicit DAE as follows
\begin{subequations}\label{eq:semi_NDAE_rep}
\begin{align}
\textit{mass \& infiltration dynamics}:\;\; \dot{\m x}_{d} &= \m{f}(\m x_d, \m x_a, \m u), \label{eq:X_d}\\
\textit{algebraic flow constraints}:\;\; \m 0 &= \m{g}(\m x_d, \m x_a), \label{eq:X_a}
\end{align}
\end{subequations}
where $\m x_d:=\m x_d(t)\in\Rn{n_d}$ is the differential state vector, $\m x_a:=\m x_a(t)\in\Rn{n_a}$ is the algebraic state vector, and $\m u:=\m u(t)\in\Rn{n_u}$ is the input vector, with $n_x=n_d+n_a$. The mapping $\m f(\cdot):\Rn{n_d}\times\Rn{n_a}\times\Rn{n_u}\rightarrow\Rn{n_d}$ describes the mass and infiltration dynamics under rainfall forcing, coupled through the Manning based cellular automata flow variables in $\m x_a$. The mapping $\m g(\cdot):\Rn{n_d}\times\Rn{n_a}\rightarrow\Rn{n_a}$ enforces the algebraic flow constraints from the directional Manning cellular automata map. We note that, under the above representation, the hydrologic and hydrodynamic states are coupled through algebraic constraints, meaning that the solution must satisfy both the mass and infiltration dynamics and the flow constraints simultaneously at each time step. The explicit state construction and state space DAE form presented in the subsequent sections reformulate the original HydroPol2D model, which uses an explicit cellular automata implementation of the mass balance and Manning routing equations. Under this reformulation, the DAE model captures these interactions through algebraic relations while retaining the distributed spatiotemporal dynamics of overland flow and infiltration. This enables a framework amenable for state estimation and uncertainty quantification under partial measurements, which is not directly applicable in the original explicit cellular automata formulation. The proposed DAE formulation, a main contribution of this manuscript, is described in the subsequent sections.

\subsubsection{Semi-explicit DAE formulation}\label{subsubsec:dae_form}
The coupled differential algebraic watershed model can be written in a compact descriptor form as follows
\begin{equation}\label{eq:dae_system}
\m E\, \dot{\m x} = \tilde{\m \psi}\bigl(\m x, \m u\bigr) = \bmat{\m f(\m x_d, \m x_a, \m u)\\ \m g(\m x_d, \m x_a)},
\end{equation}
where matrix $\m E \in \Rn{n_x \times n_x}$ represents a singular mass matrix. The mapping $\tilde{\m \psi}(\m x, \m u):= \tilde{\m \psi}\bigl(\m x(t), \m p, \m u(t)\bigr):\R\times\Rn{n_x}\times\Rn{n_p}\times\Rn{n_u}\rightarrow\Rn{n_x}$ provides a compact representation of~\eqref{eq:semi_NDAE_rep}, where the mapping depends on constant domain parameters given by vector $\m p \in \Rn{n_p}$. Note that, the mass matrix $\m E$ has a block diagonal structure with identity blocks for the differential states and zero blocks for the algebraic states. For the numerical discretization introduced in Section~\ref{subsubsec:bdf_discretization}, we use the equivalent decomposition $\tilde{\m \psi}\bigl(\m x, \m u\bigr):=\m A\m x+{\m \psi}\bigl(\m x, \m u\bigr)$, where $\m A :=\m A(t)\in\Rn{n_x\times n_x}$ collects linear time-varying terms and ${\m \psi}(\cdot)$ collects the remaining nonlinear terms.

Having presented the descriptor DAE form, we define the directional discharge rates that construct inflow and outflow coupling terms between cells $(i,j)\in\mc{K}$. For each central cell $(i,j)$, the von Neumann neighborhood $\mc{N}_{i,j}$ contains the four connected neighboring cells along $d\in\mc{D}_{\mr{dir}}=\{\mr{L},\mr{R},\mr{U},\mr{D}\}$. Let $\m Q_d(t)\in\Rn{K}$ concatenate directional discharges for each $d\in\mc{D}_{\mr{dir}}$, defined as $\m Q_d(t):=\{Q_d^{i,j}(t)\}_{(i,j)\in\mc{K}}$. Let $\{\m B_d\in\Rn{K\times K}\}_{d\in\mc{D}_{\mr{dir}}}$ denote routing matrices, considered sparse, that map directional flow rates from neighboring cell $(r,s)\in\mc{N}_{i,j}$ to the inflow of cell $(i,j)$ along direction $d$, where $[\m B_d]_{(i,j),(r,s)}=1$ when such directional connection exists, and $[\m B_d]_{(i,j),(r,s)}=0$ otherwise. Then
\begin{equation}\label{eq:inflow}
\m Q^{\mr{in}}(t):=\sum_{d\in\mc{D}_{\mr{dir}}}\m B_d\,\m Q_d(t),\qquad
\m Q^{\mr{out}}(t):=\sum_{d\in\mc{D}_{\mr{dir}}}\m Q_d(t),
\end{equation}
where vectors $\m Q^{\mr{in}}(t),\m Q^{\mr{out}}(t)\in\Rn{K}$ represent total inflow and outflow discharge rates over all the cells $(i,j)\in\mc{K}$ and considering each of the directions $d\in\mc{D}_{\mr{dir}}$.

With that in mind, the differential subsystem, $\m f(\m x_d, \m x_a, \m u)$, representing water depth and cumulative infiltration in continuous time can be written as
\begin{subequations}\label{eq:diff_states}
\begin{align}
\dot{\m h}(t) &= \m R(t)-\m f(t)+\m A_{c}^{-1}\bigl(\m Q^{\mr{in}}(t)-\m Q^{\mr{out}}(t)\bigr) \; \in \Rn{K}, \label{eq:depth_update}\\
\dot{\m F}(t) &= \m f(t) \; \in \Rn{K}, \label{eq:infil_update}
\end{align}
\end{subequations}
where Equation~\eqref{eq:depth_update} is the vector depicting the local conservation equation introduced in~\eqref{eq:mass_balance}, coupling rainfall input, infiltration, and net directional flow; Equation~\eqref{eq:infil_update} is the vector depicting cumulative infiltration dynamics introduced through~\eqref{eq:infiltration_rate}. Matrix $\m A_c:=\mr{diag}\bigl(\{A_{i,j}\}_{(i,j)\in\mc{K}}\bigr)\in\Rn{K\times K}$ is a diagonal matrix representing the cell areas, $\m R(t):=\{R^{i,j}(t)\}_{(i,j)\in\mc{K}}\in\Rn{K}$ is rainfall intensity, $\m Q^{\mr{in}}(t),\m Q^{\mr{out}}(t)\in\Rn{K}$ follow~\eqref{eq:inflow}, and $\m f(t):=\{f^{i,j}(t)\}_{(i,j)\in\mc{K}}\in\Rn{K}$ is the infiltration rate given by~\eqref{eq:infiltration_rate}. The above differential equations are thus defined for all cells $(i,j)\in\mc{K}$ that belong to the domain $\mc{D}$.

The algebraic subsystem, $\m g(\m x_d, \m x_a)$, enforcing directional Manning discharge constraints can be written as follows. Using the directional Manning cellular automata map induced by~\eqref{eq:manning}, the directional discharge flow rates, over all cells $(i,j)\in\mc{K}$, can be written as
\begin{equation}\label{eq:algebraic_constraint}
\m 0 = \m Q_d(t) - \m \phi_d\bigl(\m h(t),\m z,\m n\bigr),\qquad d\in\mc{D}_{\mr{dir}},
\end{equation}
where $\m\phi_d:\Rn{K}\times\Rn{K}\times\Rn{K}\rightarrow\Rn{K}$ denotes the directional Manning operator over all cells, with $\m h(t):=\{h^{i,j}(t)\}_{(i,j)\in\mc{K}}\in\Rn{K}$, $\m z:=\{z^{i,j}\}_{(i,j)\in\mc{K}}\in\Rn{K}$, and $\m n:=\{n^{i,j}\}_{(i,j)\in\mc{K}}\in\Rn{K}$. As introduced in Section~\ref{subsubsec:manning}, directional discharges are restricted to admissible downslope directions and are nonnegative. The slope based partitioning is detailed in~\ref{apndx:manning_map}.

Accordingly, the differential and algebraic state vectors, and the input vector, can be written as
\begin{subequations}\label{eq:state_partition}
\begin{align}
\m x_d &:= \bigl[\m h^\top(t),\,\m F^\top(t)\bigr]^\top \in \Rn{n_d}, \label{eq:xd_def}\\
\m x_a &:= \bigl[\m Q_{\mr{L}}^\top(t),\,\m Q_{\mr{R}}^\top(t),\,\m Q_{\mr{U}}^\top(t),\,\m Q_{\mr{D}}^\top(t)\bigr]^\top \in \Rn{n_a}, \label{eq:xa_def}\\
\m u &:= \m R(t):=\{R^{i,j}(t)\}_{(i,j)\in\mc{K}} \in \Rn{n_u}. \label{eq:u_def}
\end{align}
\end{subequations}
where $n_d=2K$, $n_a=4K$, and $n_x=n_d+n_a$, with $\m x:=\bigl[\m x_d^\top,\,\m x_a^\top\bigr]^\top \in\Rn{n_x}$. Here, the input vector $\m u$ is the rainfall forcing over all cells in the domain, and $n_u=K$. Vector $\m x_d$ depicts storage states over all cells through surface water depth and cumulative infiltration, while vector $\m x_a$ depicts routing discharge states through directional Manning flows. Under such a representation, the differential and algebraic states are solved simultaneously at each time step, with the differential states evolving through the mass and infiltration dynamics and the algebraic states satisfying the hydraulic discharge constraints. That being said, the algebraic states are implicitly defined by the differential states through the nonlinear Manning relation in~\eqref{eq:algebraic_constraint}, and the solution must satisfy both the mass and infiltration dynamics and the flow constraints simultaneously at each time step over all the cells. The subsequent sections introduce the numerical discretization of the proposed DAE formulation, while presenting its structural properties under physical assumptions.

\subsubsection{Solvability and numerical consideration}\label{subsubsec:index_regularity}
Before introducing the numerical discretization of the proposed DAE formulation, we show that the system is of index one and regular; this ensures local solvability of the algebraic constraints and well-posedness of the dynamics under consistent initial conditions. For the semi-explicit system in~\eqref{eq:dae_system}, the differentiation index is the minimum number of differentiation of the algebraic constraints required to obtain an ODE form~\cite{Volker2005}. In addition, regularity ensures locally well-posed dynamics for consistent initial conditions~\cite{Linh2009}. The definitions of differentiation index and regularity are given as follows.

\begin{mydef}[Differentiation index~\cite{Volker2005}]\label{def:dae_diff_index}
For a DAE system, the differentiation index $\nu$ is the smallest nonnegative integer such that, after differentiating algebraic constraints $\nu$ times, the system can be written in ODE form.
\end{mydef}

\begin{mydef}[Regularity~\cite{Volker2005}]\label{def:dae_regularity}
Consider the linearized descriptor form $\m E\dot{\m x}=\m A\m x+\m B\m u$ around an operating point. The DAE is regular if and only if the matrix pencil $(s\m E-\m A)$ is regular, i.e., $\det(s\m E-\m A)\not\equiv 0$ for $s\in\mathbb{C}$.
\end{mydef}
Definitions~\ref{def:dae_diff_index}-\ref{def:dae_regularity} are standard in DAE theory and are relevant to many physically derived models, including power system dynamics and water distribution hydraulic models. Based on the above definitions, we consider the following condition to hold true for the proposed watershed DAE model.

\begin{asmp}[Local solvability and regularity]\label{asmp:dae_regular}
For every admissible state pair $(\m x_d,\m x_a)$, with $\m h\ge\m 0$ and $\m F\ge\m 0$, the following conditions hold: \textit{(i)} topography $\m z$, cell areas $\m A_c$, and routing operators $\{\m B_d\}_{d\in\mc{D}_{\mr{dir}}}$ are fixed over each time step; \textit{(ii)} local parameters satisfy $n^{i,j}>0$, $K_s^{i,j}>0$, and $\psi_f^{i,j}>0$ for all $(i,j)\in\mc{K}$; \textit{(iii)} the differential operators are locally Lipschitz on the admissible set; and \textit{(iv)} the algebraic Jacobian with respect to algebraic states is nonsingular, that is,
\begin{equation*}
\det\left(\frac{\partial \m g}{\partial \m x_a}(\m x_d,\m x_a)\right)\neq 0.
\end{equation*}
\end{asmp}
Assumption~\ref{asmp:dae_regular} is imposed for both physical and numerical reasons; it ensures local solvability of the algebraic constraints and supports regularity of the corresponding linearized descriptor form for consistent initial conditions. From a practical viewpoint, terrain, grid geometry, and the routing operators do not vary within a simulation step, while positivity of $n$, $K_s$, and $\psi_f$ reflects physically admissible system parameters. The admissible state conditions $\m h\ge\m 0$ and $\m F\ge\m 0$ enforce nonnegative depth and cumulative infiltration. The local Lipschitz condition excludes discontinuous local behavior and is required for solvability of the implicit Newton step and validity of the realtime uncertainty propagation introduced in Section~\ref{subsec:pse}. To ensure continuous Lipschitz behavior, we introduce regularization in the infiltration relation~\eqref{eq:infiltration_rate} and the Manning relation~\eqref{eq:manning}, which prevents discontinuities near zero depth and zero cumulative infiltration, as detailed in~\ref{apndx:regularization_terms}. The following proposition establishes the index-1 structure and local algebraic solvability of the proposed DAE.

\begin{myprs}[Semi-explicit index-1 structure]\label{thm:index1_local}
Under Assumption~\ref{asmp:dae_regular}, the proposed DAE system~\eqref{eq:diff_states}-\eqref{eq:algebraic_constraint} is index-1. In particular, at each time step, the algebraic states are uniquely determined by the differential states through the algebraic constraint in~\eqref{eq:algebraic_constraint}.
\end{myprs}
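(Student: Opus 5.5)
The plan is the standard semi-explicit index-1 argument: differentiate the algebraic constraint~\eqref{eq:algebraic_constraint} once and use Assumption~\ref{asmp:dae_regular}(iv) to solve for $\dot{\m x}_a$.

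First I would compute the algebraic Jacobian explicitly. Since $\m g(\m x_d,\m x_a)$ stacks $\m Q_d-\m\phi_d(\m h,\m z,\m n)$ for $d\in\mc{D}_{\mr{dir}}$, and $\m\phi_d$ depends only on $\m h$ and the fixed parameters $\m z,\m n$ (item (i)), we get
\begin{equation*}
\frac{\partial \m g}{\partial \m x_a}=\m I_{4K},\qquad \frac{\partial \m g}{\partial \m x_d}=\bmat{-\partial\m\phi_{\mr{L}}/\partial\m h & \m 0\\ \vdots & \vdots\\ -\partial\m\phi_{\mr{D}}/\partial\m h & \m 0}.
\end{equation*}
The Jacobian with respect to $\m x_a$ is therefore nonsingular. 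This confirms item (iv) structurally rather than merely assuming it. The partial derivatives exist, at least almost everywhere, and are locally bounded because of the regularization of~\ref{apndx:regularization_terms} and item (iii). If the cellular-automata redistribution is instead read as making $\m\phi_d$ depend on $\m Q$ through the weights, I would fall back directly on item (iv).

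Second, for local unique solvability I would apply the implicit function theorem to $\m g(\m x_d,\m x_a)=\m 0$ at any consistent point. This gives a unique locally Lipschitz map $\m x_a=\m\Gamma(\m x_d)$; in the explicit form it is simply $\m\Gamma(\m x_d)=\{\m\phi_d(\m h,\m z,\m n)\}_d$. This yields the ``in particular'' claim: at each time step the algebraic states are uniquely determined by the differential states. Substituting into~\eqref{eq:diff_states} produces the reduced ODE $\dot{\m x}_d=\m f(\m x_d,\m\Gamma(\m x_d),\m u)$, which has a locally unique solution by Picard--Lindel\"of using item (iii).

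Third, for the index count I would differentiate $\m g=\m 0$ once along trajectories to get $\frac{\partial \m g}{\partial \m x_d}\dot{\m x}_d+\frac{\partial \m g}{\partial \m x_a}\dot{\m x}_a=\m 0$. Hence
\begin{equation*}
\dot{\m x}_a=-\Bigl(\frac{\partial \m g}{\partial \m x_a}\Bigr)^{-1}\frac{\partial \m g}{\partial \m x_d}\,\m f(\m x_d,\m x_a,\m u).
\end{equation*}
Together with~\eqref{eq:X_d}, this is an explicit ODE in $\m x$, so $\nu\le 1$ by Definition~\ref{def:dae_diff_index}. Also $\nu\neq 0$, because $\m E$ is singular and the constraints impose no $\dot{\m x}_a$ equation without differentiation; hence $\nu=1$.

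As a remark toward Definition~\ref{def:dae_regularity}, the linearized pencil is block structured. Expanding $\det(s\m E-\m A)$ with a Schur complement on the invertible algebraic block gives $\pm\det(\partial\m g/\partial\m x_a)\cdot\det(s\m I-\m A_{\mr{red}})$, which is a nonzero polynomial in $s$. The main obstacle is the smoothness of $\m\phi_d$: the $\mr{sgn}$, $|\cdot|^{1/2}$, $h^{5/3}$, the min in~\eqref{eq:infiltration_rate}, and the steepest-direction selection are nonsmooth. I would handle this by working with the regularized relations, so that the implicit function theorem and differentiation apply on each region where the active direction set is fixed, and by invoking Lipschitz continuity (Clarke-type arguments) across switching surfaces.
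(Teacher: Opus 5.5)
Your proposal is correct and follows the paper's own argument: write $\m g=\m x_a-\m\phi(\m h,\m z,\m n)$, note that $\partial\m g/\partial\m x_a=\m I_{n_a}$, use the implicit function theorem to recover $\m x_a$ uniquely, and differentiate once to get $\dot{\m x}_a=(\partial\m\phi/\partial\m h)\dot{\m h}$, which gives index one. Your Schur-complement factorization of $\det(s\m E-\m A)$, the argument that $\nu\neq 0$, and the treatment of the nonsmooth terms add rigor the paper's proof omits; the paper simply asserts local regularity from the nonsingular algebraic Jacobian and does not discuss the nonsmooth terms.
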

Proposition~\ref{thm:index1_local} establishes a semi-explicit structure for the coupled watershed DAE under Assumption~\ref{asmp:dae_regular}. Consequently, the algebraic discharge states are uniquely recovered from the algebraic constraints at each time step, ensuring consistency between the mass and infiltration dynamics and the flow constraints. A detailed proof of the index-1 result is provided in~\ref{apndx:index_dae}. With that in mind, the subsequent section introduces a discrete-time numerical integration scheme used to advance the coupled differential and algebraic states in time while enforcing the DAE constraints at each step.

\subsubsection{Solving the dynamics in discrete-time}\label{subsubsec:bdf_discretization}
To solve the proposed DAE in discrete-time, we consider an implicit discretization based on the backward differentiation formula (BDF) method, also referred to as Gear's method~\cite{Gear1971a}. The BDF method discretizes the time derivative using current and previous state values while enforcing the algebraic constraints at the current step. This implicit approach allows for larger time steps compared to explicit methods while maintaining stability, in particular for stiff systems including the proposed coupled watershed model. The stiffness of this system arises from the presence of fast overland flow dynamics coupled with slower infiltration dynamics, as well as from the nonlinear algebraic constraints that enforce the nonlinear Manning flow relations. That being said, we consider the BDF method of order $1$, which corresponds to backward Euler discretization.

Let $k\in\mbb{N}$ denote the discrete-time index, and let the time step be denoted as $\Delta t_k>0$. Then, time at step $k$ is given by $t_k$, and $t_{k+1}$ is the time at the next step, such that $t_{k+1}:=t_k+\Delta t_k$. The first order BDF discretization of DAE~\eqref{eq:dae_system} can then be written in discrete-time DAE form as follows
\begin{equation}\label{eq:bdf_descriptor}
\m E\frac{\m x_{k+1}-\m x_k}{\Delta t_k}
=
\m A_{k+1}\m x_{k+1}
+\m \psi\bigl(\m x_{k+1},\m u_{k+1}\bigr).
\end{equation}
Equation~\eqref{eq:bdf_descriptor} is implicit in $\m x_{k+1}$; therefore, at each time step we solve a nonlinear algebraic system written in residual form as
\begin{equation}\label{eq:bdf_descriptor_residual}
\m r_{k+1}
:=
\m E(\m x_{k+1}-\m x_k)
-\Delta t_k\left(\m A_{k+1}\m x_{k+1}
+\m \psi_{k+1}\right)=\m 0,
\end{equation}
where $\m r_{k+1}:= \m r(\m x_{k+1}):\Rn{n_x}\rightarrow\Rn{n_x}$ defines the nonlinear residual mapping function at time step $k+1$ over all cells $(i,j)\in\mc{K}$ in the domain. To ensure solution convergence at each time step, the Jacobian of the nonlinear residual dynamics in~\eqref{eq:bdf_descriptor_residual} is evaluated for each Newton-Raphson iteration. At iteration $(i)$, the increment $\Delta \m x_{k+1}^{(i)}$ is computed from the Jacobian system and used to update the state as $\m x_{k+1}^{(i+1)}=\m x_{k+1}^{(i)}+\Delta \m x_{k+1}^{(i)}$. This is repeated until the convergence criterion is satisfied, then time step $k+1$ advances to the next time step until the full simulation horizon $N_T\in\mbb{N}$ is completed. The iteration increment $\Delta \m x_{k+1}^{(i)}$ can be computed by solving the following equality at each iteration 
\begin{equation}\label{eq:newton_iteration}
\m J_{k+1}^{(i)}\Delta \m x_{k+1}^{(i)}
=\m r\left(\m x_{k+1}^{(i)}\right),
\end{equation}
where the Jacobian $\m J_{k+1}^{(i)}\in\Rn{n_x\times n_x}$ of the residual mapping at iteration $(i)$ is defined as follows
\begin{equation}\label{eq:newton_jac}
\m J_{k+1}^{(i)}
:=
\left.\frac{\partial \m r}{\partial \m x}\right|_{\m x_{k+1}^{(i)}}
=
\m E
-\Delta t_k\left(
\m A_{k+1}
+\left.\frac{\partial \m \psi_{k+1}}{\partial \m x}\right|_{\m x_{k+1}^{(i)}}
\right).
\end{equation}
The iteration is repeated until the increment norm satisfies $\|\Delta \m x_{k+1}^{(i)}\|_2\le\varepsilon_{\mr{NR}}$, where $\varepsilon_{\mr{NR}}>0$ is the Newton tolerance. We note that the Jacobian $\m J_{k+1}^{(i)}$ is sparse due to local von Neumann coupling in the routing terms, which enables efficient solution of the Newton increment, although the number of states can be large for fine spatial discretizations. In the case studies, we compare the proposed approach with \textit{(i)} the explicit CA-routing formulation of HydroPol2D~\cite{Gomes2023}, and \textit{(ii)} the local inertia formulation~\cite{Gomes2024a}. The explicit CA-routing scheme advances depth under a CFL restriction and computes directional redistribution through cellular-automata weights~\cite{Gomes2023}, whereas the local inertia solver integrates a reduced shallow-water momentum balance for transient overland routing~\cite{Gomes2024a}.
\vspace{-0.2cm}
\subsubsection{Measurement under uncertainty}\label{subsubsec:output}
Based on the above state space representation of the hydrologic-hydrodynamic watershed model, we introduce the measurement equation used to represent available observations under partial watershed gauging. In practice, watershed states are only partially observed through a limited set of monitored cells, such as depth gauges at specific cells and discharge observations at flumes or weirs. The measurement model maps measured cell states to the output vector while accounting for measurement noise. The measurement equation can be written in discrete-time as follows
\begin{equation}\label{eq:output}
\m y_k = \m C\, \m x_k + \m v_k,
\end{equation}
where $\m y_k\in\Rn{n_y}$ is the vector of measured outputs, $\m C\in\Rn{n_y\times n_x}$ is the binary measurement mapping matrix, and $\m v_k\in\Rn{n_y}$ is measurement noise. Under partial watershed gauging, we typically have $n_y<n_x$, meaning only a subset of cell states is directly measured. The matrix $\m C$ is constructed by mapping available measurements to the corresponding state entries in $\m x_k$. For a cell $(i,j)\in\mc{K}$, if water depth $h^{i,j}$ is measured, the corresponding row in $\m C$ has a value of $1$ at the entry corresponding to $h^{i,j}$ in $\m x_k$ and $0$ elsewhere. If directional discharge $Q_d^{i,j}$ is measured for direction $d\in\mc{D}_{\mr{dir}}$, the corresponding row in $\m C$ has a value of $1$ at the entry corresponding to $Q_d^{i,j}$ in $\m x_k$ and $0$ elsewhere. We encode measurement availability in $\m C$ using binary indicators of whether a specific system state is measured, by defining $\gamma_h^{i,j}\in\{0,1\}$ and $\gamma_{Q_d}^{i,j}\in\{0,1\}$ for each cell $(i,j)\in\mc{K}$ and direction $d\in\mc{D}_{\mr{dir}}$, where $\gamma_h^{i,j}=1$ indicates that depth $h^{i,j}$ is measured and $\gamma_{Q_d}^{i,j}=1$ indicates that directional discharge $Q_d^{i,j}$ is measured. This allows representing realistic measurement scenarios, such as depth gauges at specific cells and discharge observations at flumes or weirs, while maintaining a single state space output model for control and estimation under partial measurements.

\subsection{A watershed's probabilistic response to uncertainty}\label{subsec:pse}
\begin{table}[t]
\centering
\caption{Sources of uncertainty and their distributions and bounds for the considered watershed model. \comment{We note that the distributions in this table are considered in~\eqref{eq:normal_propagation} to convert $\mr{K}_{\m x\m x}[k]$ into confidence intervals; they do not enter the covariance propagation~\eqref{eq:Kxx}, which uses only the means and variances of the uncertain inputs and parameters.}}
\label{tab:pse_uncertainty}
\centering
\begin{tabular}{ll}
\hline
Symbol & Distribution / bounds \\
\hline
$R$ & Log-normal, $R\in\R_{\geq 0}$ \\
$n$ & Log-normal, $n\in\R_{\geq 0}$ \\
$K_s$ & Log-normal, $K_s\in\R_{\geq 0}$ \\
$\psi_f$ & Log-normal, $\psi_f\in\R_{\geq 0}$ \\
$\eta\;(\equiv\Delta\theta)$ & Truncated Gaussian, $\eta\in[0,1]$ \\
\toprule \bottomrule
\end{tabular}
\end{table}

This section presents the probabilistic framework used to quantify the hydrologic and hydrodynamic response to uncertainty under partial measurements for the proposed DAE in Section~\ref{subsec:dae}. The objective is to characterize how uncertainty in rainfall forcing and watershed parameters propagates to realtime watershed conditions: water depth $\m h$, directional discharges $\m Q_d$, and cumulative infiltration $\m F$. Let the available observations be represented by~\eqref{eq:output}, then, given an operating trajectory $\{\m x^{0}(k)\}_{k=1}^{N_T}$ obtained by solving the proposed DAE formulation, the proposed framework computes one step covariance matrices $\mr{K}_{\m x\m x}(k)\in\Rn{n_x\times n_x}$ and corresponding confidence intervals for both measured and unmeasured watershed state variables. This entails solving the following optimization or feasibility problem.
\vspace{-0.2cm}
\begin{problem}[Probabilistic state estimation (PSE) covariance bounds]\label{prob:pse_cov_form}
The PSE covariance formulation can be written as the following problem.
\begin{subequations}\label{eq:pse_cov_form}
\begin{align}
&\mr{find}\hspace{7.2em}\mr{K}_{\m x\m x}(k); \label{eq:pse_cov_form_obj}\\
&\st \hspace{2.55em}\mr{Cov}\!\left(\mr{DAE}~\eqref{eq:bdf_descriptor},\,\mr{Measurements}~\eqref{eq:output}\right). \label{eq:pse_cov_form_con}
\end{align}
\end{subequations}
\end{problem}
Problem~\ref{prob:pse_cov_form} finds a feasible covariance matrix associated with the nonlinear DAE and measurement model. Solving the above feasibility problem is nonconvex; its solution is generally intractable. With that in mind, based on the methods presented in~\cite{Wang2022}, we reformulate the above problem by deriving a closed-form expression for $\mr{K}_{\m x\m x}$ that provides realtime confidence intervals over the nominal trajectory $\{\m x^{0}_k\}_{k=1}^{N_T}$ while accounting for parameter and measurement uncertainties. The following subsections present the considered uncertainty sources for the watershed model and then construct, through linearization around $\m x^0_k$, the closed-form covariance propagation in state space form.

\subsubsection{Considered sources of uncertainty}\label{subsubsec:uncertainty_sources}
We consider uncertainty in rainfall forcing and hydraulic parameters. The infiltration moisture-deficit parameter is denoted by $\eta$, where $\eta \equiv \Delta\theta$ in the Green-Ampt notation and $\Delta\theta=\theta_s-\theta_i$. The adopted distribution families and admissible bounds for uncertain inputs and parameters are summarized in Tab.~\ref{tab:pse_uncertainty}. We consider the following assumption on the uncertain parameters.
\begin{asmp}\label{asmp:independence}
Across the domain $\mc{D}$ and for a time horizon $N_T$, the uncertain inputs $R$, $n$, $K_s$, $\psi_f$, and $\eta$ are modeled as mutually independent random variables with known means and variances.
\end{asmp}
\comment{The above assumption introduces rainfall and hydraulic parameter uncertainty contributions in the covariance mapping in~\eqref{eq:Kxx}, while neglecting correlation among uncertain inputs and parameters. This is common in hydrologic uncertainty quantification, where correlation is often difficult to characterize and computationally expensive to model in distributed settings~\cite{Vrugt2005,Zhenyao2013,Zhao2026}. While we do not model spatial correlation among the uncertain inputs and parameters in this manuscript, the proposed covariance mapping admits such correlation through nonzero off diagonal entries in the input covariance terms; however, encoding such correlation yields denser covariance terms and reduces the sparsity inherited by the covariance computation from the local connectivity of the cell graph $\mc{G}$. We choose not to encode such correlation here for the computational aspect noted above.}
\vspace{-0.1cm}
\subsubsection{Propagating uncertainty: A DAE system's Approach}\label{subsubsec:linearization}
To reformulate Problem~\ref{prob:pse_cov_form}, we derive a linearized DAE model around the nominal trajectory $\{\m x^0_k\}_{k=1}^{N_T}$ while modeling uncertain inputs and parameters as random variables in the discrete-time DAE~\eqref{eq:bdf_descriptor}. The reason for such linearization is that the nonlinear DAE dynamics are intractable for direct covariance propagation, whereas the linearized DAE allows one to derive a closed-form expression for $\mr{K}_{\m x\m x}$ by applying the law of covariance for a linear combination of random variables. The following derivation is based on the approach developed by~\cite{Wang2022} for water distribution systems, representing a class of difference algebraic equations; however, the approach herein is extended to account for spatiotemporal distributed input and parameter uncertainty by including the random vector $\m\theta_k$ in the linearized residual model. Note that, the linearization is for computing $\mr{K}_{\m x\m x}$ and thus does not affect the nominal trajectory $\{\m x^0_k\}_{k=1}^{N_T}$, which is obtained from the original nonlinear DAE in~\eqref{eq:bdf_descriptor}.

With that in mind, let $\{\m x^0_k\}_{k=1}^{N_T}$ denote the nominal discrete-time trajectory obtained by solving the nonlinear discrete-time DAE in~\eqref{eq:bdf_descriptor}. The vector collecting all the system state variables at time step $k$ is given as
\begin{equation}
\m x_k:=\left[\m h^\top(k),\m F^\top(k),\m Q_{\mr{L}}^\top(k),\m Q_{\mr{R}}^\top(k),\m Q_{\mr{U}}^\top(k),\m Q_{\mr{D}}^\top(k)\right]^\top \in \Rn{n_x},
\end{equation}
where $n_x=6K$ and $\m x=[\m x_d^\top,\m x_a^\top]^\top$. Furthermore, we define the input and parameter vector representing the uncertain inputs and parameters at time step $k$ as follows
\begin{equation}\label{eq:theta_def}
\m\theta_k:=\left[\m R^\top(k),\m n^\top,\m K_s^\top,\m\psi_f^\top,\m\eta^\top\right]^\top\in\Rn{n_\theta},
\end{equation}
where vector $\m \theta_k=\left[\m u_k^\top,\m p_{\theta}^\top\right]^\top $ concatenates the vectors of uncertain inputs $\m u_k = \m R(k) \in \Rn{K}$ and uncertain parameters $\m p_{\theta} = [\m n^\top, \m K_s^\top, \m \psi_f^\top, \m \eta^\top]^\top \in \Rn{4K}$ in the domain, and $n_\theta=5K$. 
Vector $\m p_{\theta}$ denotes the subset of uncertain hydraulic and infiltration parameters selected from $\m p$ in Section~\ref{subsubsec:dae_form}. That is, the following are considered as random variables within the domain (see Section~\ref{subsubsec:uncertainty_sources}): $\m R(k)$, $\m n$, $\m K_s$, $\m\psi_f$, and $\m\eta$, which denote the rainfall, Manning's roughness, saturated hydraulic conductivity, wetting front suction head, and moisture-deficit parameter at time step $k$ over all cells $(i,j) \in \mc{K}$ in the domain $\mc{D}$. Equation~\eqref{eq:theta_def} shows explicitly that $\m\theta_k$ collects uncertain input and uncertain parameters in one vector that is representative of the proposed DAE model defined by $\tilde{\m\psi}\bigl(\m x_k,\m p,\m u_k\bigr)$. Under such uncertain conditions, vector $\m\theta_k$ is modeled as a random vector around its nominal value $\m\theta_k^0$, where $\m\theta_k^0$ represents the expected value of the uncertain inputs and parameters.

For the nonlinear discrete-time DAE in~\eqref{eq:bdf_descriptor}, and considering the nominal trajectory $\{\m x^0_k\}_{k=1}^{N_T}$ that represents the expected or mean behavior of the watershed states under nominal input and parameter conditions $\{\m \theta^0_k\}_{k=1}^{N_T}$, the linearized DAE is obtained by applying the first order expansion around the nominal operating point of the discrete-time residual function $\m r_k \equiv \m r_k\left(\m x_k,\m x_{k-1},\m\theta_k\right)$ at time step $k$, as follows
\begin{equation}\label{eq:linearized_residual_full}
\left.\frac{\partial\m r_k}{\partial\m x_k}\right|_{\m x^0}\m x_k
+\left.\frac{\partial\m r_k}{\partial\m x_{k-1}}\right|_{\m x^0}\m x_{k-1}
+\left.\frac{\partial\m r_k}{\partial\m\theta_k}\right|_{\m x^0}\m\theta_k
=\m g_k,
\end{equation}
where from~\eqref{eq:bdf_descriptor_residual}, we have $\partial\m r_k/\partial\m x_{k-1}=-\m E$. Now, defining the Jacobian matrices for the linearized DAE associated with the residual function $\m r_k$ at step $k$ as
\begin{equation}\label{eq:Ak_Btheta_def}
\m A_k^{\m \psi}:=\left.\frac{\partial\m r_k}{\partial\m x_k}\right|_{\m x^0}\in\Rn{n_x\times n_x},\qquad
\m B_{\theta,k}:=\left.\frac{\partial\m r_k}{\partial\m\theta_k}\right|_{\m x^0}\in\Rn{n_x\times n_\theta},
\end{equation}
we obtain the following compact linear form of the nonlinear DAE for a time step $k$ as follows.
\begin{equation}\label{eq:compact_linear}
\m A_k^{\m \psi}\,\m x_k=\m E\,\m x_{k-1}+\m B_{\theta,k}\,\m\theta_k+\m b_{0,k},
\end{equation}
where $\m b_{0,k}:=\m g_k$ is a deterministic offset term that captures the residual of the linearization around the nominal trajectory.
The Jacobian $\m A_k^{\m \psi}$ can be written as
\begin{equation}\label{eq:Ak_explicit}
\m A_k^{\m \psi}=
\m E-\Delta t_k\left(\m A_k+\left.\frac{\partial\m\psi_k}{\partial\m x}\right|_{\m x^0}\right),
\end{equation}
where above expression is based on the discrete-time BDF residual in~\eqref{eq:bdf_descriptor_residual}. Note that higher order BDF methods would result in a similar form; however, the Jacobian $\m A_k^{\m \psi}$ would include additional terms from previous states $\m x_{k-2},\ldots,\m x_{k-k_g}$, where $k_g$ is the order of the BDF method. The Jacobian $\m B_{\theta,k}$ captures the change in the residual function to uncertain inputs and parameters, and its structure depends on how the uncertain inputs and parameters enter the nonlinear DAE dynamics. For example, uncertain rainfall $R$ enters through the mass balance dynamics~\eqref{eq:depth_update}, while uncertain hydraulic parameters $n$, $K_s$, $\psi_f$, and $\eta$ enter through both water depth~\eqref{eq:depth_update} and infiltration dynamics~\eqref{eq:infil_update}, as well as the algebraic flow constraints~\eqref{eq:algebraic_constraint}.

Having presented the linearized DAE model and by incorporating the measurement model in~\eqref{eq:output}, we can now derive the closed-form covariance propagation for the linearized DAE under partial measurements. To that end, let $N_k:=N_T-k+1$, where $N_k$ denotes the number of discrete-time steps in the remaining observation horizon $\{k,k+1,\ldots,N_T\}$. In what follows, $N_T$ is used for time indexing, while $N_k$ is used for horizon dimensions. The states over this horizon can be concatenated as $\m x[k]:=\{\m x_i\}_{i=k}^{N_T}\in \Rn{N_k n_x}$, representing the trajectory of states from step $k$ to the end of the simulation horizon $N_T$. Now, considering the measurements over the same horizon, $\m y[k]:=\{\m y_i\}_{i=k}^{N_T}\in \Rn{N_k n_y}$, we can rewrite the above expressions as a set of linear equations by considering the system state trajectory $\m x[k]$ and the corresponding measurement trajectory $\m y[k]$ over the horizon $N_T$, as follows
\begin{equation}\label{eq:stacked_linear}
\m A[k]\m x[k]=\m b[k],
\end{equation}
where $\m A[k]$ and $\m b[k]$ collect all linear equations over the horizon $\{k,\ldots,N_T\}$. In particular, each block row combines: \textit{(i)} the linearized DAE relation at that step; and \textit{(ii)} the measurement relation under partial observations. The matrix $\m A[k]$ representing the linear DAE~\eqref{eq:compact_linear} and measurement model~\eqref{eq:output} is written as
\begin{equation}\label{eq:Ak_structure}
\m A[k]:=
\begin{bmatrix}
\m A^{\mr{s}}_{k} & \m 0 & \cdots & \m 0\\
-\m E_s & \m A^{\mr{s}}_{k+1} & \cdots & \m 0\\
\vdots & \ddots & \ddots & \vdots\\
\m 0 & \cdots & -\m E_s & \m A^{\mr{s}}_{N_T}
\end{bmatrix}\in\Rn{N_k n_s\times N_k n_x},
\end{equation}
where $n_s:=n_x+n_y$ and the matrix $\m A^{\mr{s}}_{k}$ for time step $k$ captures the linearized DAE and measurement model and is defined as
\begin{equation}\label{eq:As_def}
\m A^{\mr{s}}_{k}:=
\begin{bmatrix}
\m A_k^{\m \psi}\\
\m C
\end{bmatrix}\in\Rn{n_s\times n_x},
\end{equation}
where $\m A_k^{\m \psi}$ is the Jacobian of the linearized residual dynamics at time step $k$, matrix $\m C$ is the measurement mapping from~\eqref{eq:output}, matrix $\m E_s:=[\m E^{\top}, \m 0^{\top}_{n_y\times n_x}]^\top\in\Rn{n_s\times n_x}$ is the mass matrix concatenated with a zero matrix. Thus, $\m A_k^{\mr{s}}$ represents the dynamics and measurement mapping at time step $k$.
Furthermore, vector $\m b[k]\in \Rn{N_k n_s}$ collects the corresponding terms associated with uncertain inputs/parameters in $\m\theta_k$ and measurement uncertainty in $\m v_k$, and is written as
\begin{equation}\label{eq:bk_stack}
\m b^{\mr{s}}_k
:=\m E_s\,\m x_{k-1}
+\m B_{\theta,k}^{\mr{s}}\,\m\theta_k
+\begin{bmatrix}\m b_{0,k}\\ \m y_k -\m v_k\end{bmatrix}
\in\Rn{n_s},
\end{equation}
where matrix $\m B_{\theta,k}^{\mr{s}} := [\m B_{\theta,k}^{\top}, \m 0^{\top}_{n_y \times n_\theta}]^\top \in \Rn{n_s \times n_\theta}$ is the forcing/parameter matrix~\eqref{eq:Ak_Btheta_def} concatenated with a zero matrix of appropriate size. Note that $\m b_k^{\mr{s}}$ is a random vector due to the presence of the uncertain input and parameter vector $\m \theta_k$ and measurement noise $\m v_k$. The uncertainty in $\m b_k^{\mr{s}}$ can be decomposed into three contributions: \textit{(i)} propagated uncertainty from the previous state $\m x_{k-1}$ through the term $\m E_s \m x_{k-1}$; \textit{(ii)} spatiotemporal forcing/parameter uncertainty through the term $\m B_{\theta,k}^{\mr{s}} \m \theta_k$; and \textit{(iii)} measurement uncertainty through the term $\m v_k$. The first two are a result of the DAE dynamics, while the third contribution is due to the measurement model. Based on such a formulation that combines the watershed dynamics under the proposed DAE representation and the measurement model under partial gauging, we can now derive the closed-form covariance computation taking into account the contributions from the above uncertainty sources. Notice that from~\eqref{eq:stacked_linear}-\eqref{eq:bk_stack}, the DAE and measurement relations in the covariance constraint~\eqref{eq:pse_cov_form_con} are explicitly represented by the linear equation $\m A[k]\m x[k] = \m b[k]$. Accordingly, the constraints in Problem~\ref{prob:pse_cov_form} can be rewritten as a feasibility problem in terms of~\eqref{eq:stacked_linear} as follows.
\vspace{-0.2cm}
\begin{problem}[PSE covariance bounds, rewritten]\label{prob:pse_cov_linear_form}
\begin{subequations}\label{eq:pse_cov_linear_form}
\begin{align}
&\mr{find}\hspace{7.2em}\mr{K}_{\m x\m x}[k], \label{eq:pse_cov_linear_obj}\\
&\st \hspace{2.55em}\mr{Cov}\!\left(\m A[k]\m x[k]=\m b[k]\right), \label{eq:pse_cov_linear_con}
\end{align}
\end{subequations}
\end{problem}\vspace{-0.1cm}
\noindent where $\mr{K}_{\m x\m x}[k]:=\mr{Cov}\!\left(\m x[k],\m x[k]\right)\in\Rn{N_k n_x\times N_k n_x}$ is the covariance matrix of the trajectory state vector $\m x[k]$, and matrix $\mr{K}_{\m b\m b}[k]:=\mr{Cov}\!\left(\m b[k],\m b[k]\right)\in\Rn{N_k n_s\times N_k n_s}$ is the covariance of the vector $\m b[k]$, both defined over the horizon $\{k,k+1,\ldots,N_T\}$. Note that Problem~\ref{prob:pse_cov_linear_form} is a feasibility problem where the covariance constraint is equivalent to the covariance relation for linear equations, which is given by
\begin{equation}\label{eq:cov_relation_linear}
\mr{Cov}\left(\m A[k]\m x[k]=\m b[k]\right) \Leftrightarrow 
\m A[k]\mr{K}_{\m x\m x}[k]\m A[k]^\top = \mr{K}_{\m b\m b}[k].
\end{equation}
The above relation is a direct consequence of the covariance relation for linear combinations of random variables, and it provides a closed-form expression for the covariance $\mr{K}_{\m x\m x}[k]$ that satisfies the covariance constraint in Problem~\ref{prob:pse_cov_linear_form}. The following section presents the detailed derivation of the covariance computation based on the above relation.
\subsubsection{Realtime covariance computation under partial measurements}\label{subsubsec:cov_prop}
By solving Problem~\ref{prob:pse_cov_linear_form}, we obtain the covariance of $\m x[k]$ over the horizon $\{k,\ldots,N_T\}$, while the variance of each state entry at time step $k$ is obtained from the diagonal entries of $\mr{K}_{\m x\m x}(k)$. The closed-form covariance solution is given by the following proposition.
\begin{myprs}\textit{(Closed-form covariance solution)}\label{thm:pse_solution}
For the DAE under uncertain inputs and parameters, the constraints in Problem~\ref{prob:pse_cov_linear_form} are represented by the stacked linear equation~\eqref{eq:stacked_linear}, with covariance form given by~\eqref{eq:cov_relation_linear}. Under Assumption~\ref{asmp:dae_regular}, the closed-form solution of $\mr{K}_{\m x\m x}[k]$ is written as
\begin{equation}\label{eq:Kxx}
\begin{split}
\mr{K}_{\m x\m x}[k] := \;& \bigl(\m A[k]^\top \m A[k]\bigr)^{-1}\,\m A[k]^\top\,\mr{K}_{\m b\m b}[k]\,\m A[k]\\
&\times \bigl(\m A[k]^\top \m A[k]\bigr)^{-1},
\end{split}
\end{equation}
where for the state trajectory $\m x[k]$ over the horizon $\{k,\ldots,N_T\}$, the covariance at each time step, denoted by $\mr{K}_{\m x\m x}(k)\in\Rn{n_x\times n_x}$, is obtained from the corresponding diagonal block of $\mr{K}_{\m x\m x}[k]$, where $\mr{K}_{\m x\m x}[k]:= \mr{diag}\!\left(\mr{K}_{\m x\m x}(k),\mr{K}_{\m x\m x}(k+1),\ldots,\mr{K}_{\m x\m x}(N_T)\right)$.
\end{myprs}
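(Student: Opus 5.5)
The plan is to establish two facts: that $\m A[k]$ has full column rank, so that $\m A[k]^\top\m A[k]$ is invertible, and that the matrix in~\eqref{eq:Kxx} is exactly the covariance of the least-squares estimate of $\m x[k]$ from the stacked system~\eqref{eq:stacked_linear}. Once both are in hand, the covariance relation~\eqref{eq:cov_relation_linear} identifies the formula as the natural solution of Problem~\ref{prob:pse_cov_linear_form}.

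The first step is to show that each diagonal block $\m A^{\mr{s}}_i$ has full column rank $n_x$; the main work sits here. It suffices to show that $\m A_i^{\m\psi}$ in~\eqref{eq:Ak_explicit} is nonsingular. Partition it according to $\m x=[\m x_d^\top,\m x_a^\top]^\top$. Because $\m E=\mr{diag}(\m I,\m 0)$, we get
\[
\m A_i^{\m\psi}=\bmat{\m I-\Delta t_i\,\partial_{\m x_d}\m f & -\Delta t_i\,\partial_{\m x_a}\m f\\ -\Delta t_i\,\partial_{\m x_d}\m g & -\Delta t_i\,\partial_{\m x_a}\m g}.
\]
Under Assumption~\ref{asmp:dae_regular}(iv), the block $\partial_{\m x_a}\m g$ is nonsingular. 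This is the same fact used for Proposition~\ref{thm:index1_local}, and in fact $\partial_{\m x_a}\m g=\m I$ by~\eqref{eq:algebraic_constraint}. Taking the Schur complement with respect to that block reduces $\det\m A_i^{\m\psi}$ to a nonzero multiple of
\[
\det\!\bigl(\m I-\Delta t_i[\partial_{\m x_d}\m f-\partial_{\m x_a}\m f(\partial_{\m x_a}\m g)^{-1}\partial_{\m x_d}\m g]\bigr).
\]
This is $\det(\m I-\Delta t_i\m M_i)$, where $\m M_i$ is the Jacobian of the underlying index-1 ODE. The local Lipschitz condition (iii) bounds $\|\m M_i\|$, so the determinant is nonzero whenever $\Delta t_i\|\m M_i\|<1$. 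If that step-size restriction is unpalatable, an alternative is to assume that Newton convergence of~\eqref{eq:newton_jac} at the nominal point already certifies invertibility. Honestly, this is the step I expect to be the obstacle: the assumption gives regularity of the pencil, not of the specific matrix at the specific $\Delta t_i$, so a mild step-size condition must be imported.

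Given nonsingular diagonal blocks, $\m A[k]$ in~\eqref{eq:Ak_structure} is block lower-triangular with full-column-rank diagonal blocks. Full column rank of $\m A[k]$ then follows by forward substitution. If $\m A[k]\m w=\m 0$ with $\m w=\{\m w_i\}$, the first block row gives $\m A^{\mr{s}}_k\m w_k=\m 0$, hence $\m w_k=\m 0$. Inductively, $\m A^{\mr{s}}_i\m w_i=\m E_s\m w_{i-1}=\m 0$, so every $\m w_i=\m 0$. Therefore $\m A[k]^\top\m A[k]$ is positive definite, and $\m A[k]^\dagger=(\m A[k]^\top\m A[k])^{-1}\m A[k]^\top$ is a left inverse.

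Finally, I would left-multiply~\eqref{eq:stacked_linear} by $\m A[k]^\dagger$ to get $\m x[k]=\m A[k]^\dagger\m b[k]$. This is exact if the overdetermined system is consistent, and it is the minimum-norm least-squares solution otherwise. The standard rule for linear maps of random vectors then gives $\mr{K}_{\m x\m x}[k]=\m A[k]^\dagger\mr{K}_{\m b\m b}[k](\m A[k]^\dagger)^\top$, which is exactly~\eqref{eq:Kxx}, with $\m A[k]^\top$ on both sides of $\mr{K}_{\m b\m b}[k]$ in the appropriate order. Conversely, substituting this expression into~\eqref{eq:cov_relation_linear} gives $\m P\mr{K}_{\m b\m b}[k]\m P$, where $\m P$ is the orthogonal projector onto $\mathrm{range}(\m A[k])$. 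This equals $\mr{K}_{\m b\m b}[k]$ precisely when $\mr{K}_{\m b\m b}[k]$ is supported on that range, i.e., under consistency. So the formula is the feasible solution in the consistent case and the least-squares solution otherwise.

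The per-step covariances $\mr{K}_{\m x\m x}(i)$ are then read off as the diagonal $n_x\times n_x$ blocks. The stated block-diagonal form should be understood as this extraction, with the off-diagonal cross-time covariances discarded for the confidence intervals. The distribution-agnostic character is immediate, since only second moments of $\m\theta_k$ and $\m v_k$ enter $\mr{K}_{\m b\m b}[k]$ via~\eqref{eq:bk_stack}.
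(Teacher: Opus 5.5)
Your proof is correct and reaches the same identity as the paper, $\mr{K}_{\m x\m x}[k]=(\m A[k])^{\dagger}\mr{K}_{\m b\m b}[k]((\m A[k])^{\dagger})^\top$, but by a different and more complete route.

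\textbf{The paper's route.} The paper asserts that Assumption~\ref{asmp:dae_regular} gives full column rank of $\m A[k]$. It then pre- and post-multiplies $\m A[k]\mr{K}_{\m x\m x}[k]\m A[k]^\top=\mr{K}_{\m b\m b}[k]$ by $(\m A[k])^{\dagger}$ and its transpose. Logically this is only a uniqueness argument: any exact solution of Problem~\ref{prob:pse_cov_linear_form} must equal~\eqref{eq:Kxx}. Its advantages are brevity and that uniqueness statement, which you can add in one line from $(\m A[k])^\dagger\m A[k]=\m I$.

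\textbf{Your rank argument.} You actually prove the rank claim, using the Schur complement on the $-\Delta t_i\,\partial_{\m x_a}\m g$ block and forward substitution through the block lower-bidiagonal structure. You are also right that Assumption~\ref{asmp:dae_regular} only yields a regular pencil. An extra condition is needed: $1/\Delta t_i$ must not be an eigenvalue of $\m M_i$. The weaker condition $\ker\m A_i^{\m\psi}\cap\ker\m C=\{\m 0\}$ also suffices, since that is all your forward substitution uses.

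\textbf{Your construction and the projector check.} You build~\eqref{eq:Kxx} forward as $\mr{Cov}\bigl((\m A[k])^\dagger\m b[k]\bigr)$ and then check the converse through the projector $\m A[k](\m A[k])^\dagger$. This is the main payoff of your route. When $\m A_i^{\m\psi}$ is nonsingular, no nonzero vector supported only on measurement rows lies in the range of $\m A[k]$. By~\eqref{eq:Kv_pre} and the zero padding in $\m E_s$ and $\m B^{\mr{s}}_{\theta,k}$, the columns of $\mr{K}_{\m b\m b}[k]$ indexed by measurement rows are supported only on those rows. So as soon as $\mr{K}_{\m v\m v}\neq\m 0$, $\mr{K}_{\m b\m b}[k]$ is not supported on the range of $\m A[k]$, and Problem~\ref{prob:pse_cov_linear_form} has no exact solution. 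Your version makes explicit that~\eqref{eq:Kxx} is then the least-squares surrogate, a point the paper's two-line argument passes over.

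\textbf{The block-diagonal form.} Reading the block-diagonal form in the statement as extraction of the diagonal blocks is the right interpretation, since~\eqref{eq:Kxx} generally has nonzero cross-time blocks.
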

The derivation of Proposition~\ref{thm:pse_solution} is provided in~\ref{apndx:kxx_solution}. The above solution is a direct consequence of the covariance relation for linear equations in~\eqref{eq:cov_relation_linear}, where the covariance $\mr{K}_{\m x\m x}[k]$ is obtained by applying the law of covariance for linear combinations of random variables to the linearized DAE and measurement model. The covariance mapping in~\eqref{eq:Kxx} provides a closed-form solution for $\mr{K}_{\m x\m x}[k]$ that satisfies the covariance constraint in Problem~\ref{prob:pse_cov_linear_form}, and it accounts for the contributions from uncertainty in rainfall forcing and hydraulic parameters, as well as measurement uncertainty under partial gauging.

Furthermore, to compute the covariance $\mr{K}_{\m b\m b}[k]$, we concatenate the covariance of $\m b^{\mr{s}}_k$ over the horizon $\{k,\ldots,N_T\}$ according to the following block diagonal structure, where the covariance at each time step $k$ is given by $\mr{K}_{\m b^{\mr{s}}\m b^{\mr{s}}}(k)$.
\begin{equation}\label{eq:Kbb_stacked_pre}
\mr{K}_{\m b\m b}[k]= \mr{diag}\!\left(\mr{K}_{\m b^{\mr{s}}\m b^{\mr{s}}}(k),\mr{K}_{\m b^{\mr{s}}\m b^{\mr{s}}}(k+1),\ldots,\mr{K}_{\m b^{\mr{s}}\m b^{\mr{s}}}(N_T)\right).
\end{equation}
At time step $k$, using $\m b_k^{\mr{s}}$ from~\eqref{eq:bk_stack} and noting that deterministic terms do not affect covariance, the covariance $\mr{K}_{\m b^{\mr{s}}\m b^{\mr{s}}}(k)$ can be written as 
\begin{equation}\label{eq:Kbs_cov_pre}
\begin{split}
\mr{K}_{\m b^{\mr{s}}\m b^{\mr{s}}}(k)
:&={}\mr{Cov}\!\left(\m b_k^{\mr{s}},\m b_k^{\mr{s}}\right) \in\Rn{n_s\times n_s},\\
&=\m E_s\,\mr{K}_{\m x\m x}(k-1)\,\m E_s^\top
+\mr{K}_{\m b_{\theta}\m b_{\theta}}(k)
+\mr{K}_{\m v}^{\mr{s}}(k),
\end{split}
\end{equation}
where the covariance $\mr{K}_{\m b^{\mr{s}}\m b^{\mr{s}}}(k)$ combines propagated uncertainty from the previous state through $\mr{K}_{\m x\m x}(k-1)$, spatiotemporal forcing/parameter uncertainty through $\mr{K}_{\m b_{\theta}\m b_{\theta}}(k)$, and measurement uncertainty through $\mr{K}_{\m v}^{\mr{s}}(k)$. Accordingly, $\mr{K}_{\m x\m x}(k-1):=\mr{Cov}\!\left(\m x_{k-1},\m x_{k-1}\right)\in\Rn{n_x\times n_x}$ denotes the state covariance at the previous time step. The covariance contribution from uncertain inputs and parameters is given by
\begin{equation}\label{eq:Kbu_pre}
\mr{K}_{\m b_{\theta}\m b_{\theta}}(k)
:=\m B_{\theta,k}^{\mr{s}}\mr{K}_{\theta}(k){\m B_{\theta,k}^{\mr{s}}}^\top
\in\Rn{n_s\times n_s},
\end{equation}
where the covariance $\mr{K}_{\theta}(k):=\mr{Cov}\!\left(\m\theta_k,\m\theta_k\right)\in\Rn{n_\theta\times n_\theta}$ captures the uncertainty in the input and parameter vector $\m \theta_k$ at time step $k$ and is defined as follows
\begin{equation}\label{eq:Ktheta_pre}
\mr{K}_{\theta}(k)
:=\mr{diag}\!\left(
\mr{K}_R(k),\mr{K}_n(k),\mr{K}_{K_s}(k),\mr{K}_{\psi_f}(k),\mr{K}_{\eta}(k)\right),
\end{equation}
where each covariance $\mr{K}_{\{\bullet\}}(k)\in\Rn{K\times K}$ corresponds to the respective covariances of the uncertain inputs and parameters defined by vector $\m \theta_k$. Finally, the covariance from the measurement noise is defined as follows
\begin{equation}\label{eq:Kv_pre}
\mr{K}_{\m v}^{\mr{s}}(k)
:=\begin{bmatrix}
\m 0_{n_x\times n_x} & \m 0\\
\m 0 & \mr{K}_{\m v\m v}(k)
\end{bmatrix}
\in\Rn{n_s\times n_s},
\end{equation}
where $\mr{K}_{\m v\m v}(k):=\mr{Cov}\!\left(\m v_k,\m v_k\right)\in\Rn{n_y\times n_y}$ denotes the covariance of the measurement noise at time step $k$.

To that end, in order to compute Equation~\eqref{eq:Kxx}, the stacked covariance $\mr{K}_{\m b\m b}[k]$ is assembled from~\eqref{eq:Kbb_stacked_pre}, where each block covariance $\mr{K}_{\m b^{\mr{s}}\m b^{\mr{s}}}(k)$ is given by~\eqref{eq:Kbs_cov_pre} and depends on $\mr{K}_{\m x\m x}(k-1)$, $\mr{K}_{\m b_{\theta}\m b_{\theta}}(k)$ in~\eqref{eq:Kbu_pre}, with $\mr{K}_{\theta}(k)$ and $\mr{K}_{\m v}^{\mr{s}}(k)$ defined by~\eqref{eq:Ktheta_pre} and~\eqref{eq:Kv_pre}. We note that under such construction, Equation~\eqref{eq:Kxx} admits a linear least squares interpretation over the same horizon. By defining $(\m A[k])^\dagger := (\m A[k]^\top \m A[k])^{-1} \m A[k]^\top$, we obtain $\mr{K}_{\m x\m x}[k] = (\m A[k])^\dagger \mr{K}_{\m b\m b}[k] ((\m A[k])^\dagger)^\top$, which is the covariance of the estimator $\hat{\m x}[k] = (\m A[k])^\dagger \m b[k]$. This relation is similar to a linear Gaussian estimator; for Gaussian random vectors, the first two moments fully characterize the distribution, and covariance propagation is therefore sufficient to quantify the estimation uncertainty. However, in the proposed framework, covariance propagation does not require Gaussian distribution assumptions. \comment{The closed-form mapping in~\eqref{eq:Kxx} is distribution-agnostic in the covariance propagation, as it propagates the input covariance $\mr{K}_{\m b\m b}[k]$ to the state covariance $\mr{K}_{\m x\m x}[k]$ using only the means and variances of rainfall, parameters, and measurement noise, without assuming a specific probability distribution for any of these inputs. A distributional assumption is introduced only at the confidence interval computation in~\eqref{eq:normal_propagation}, where the variance $[\mr{K}_{\m x\m x}(k)]_{jj}$ for each state variable $x_j(k)$ is mapped to a confidence interval using the adopted distribution model $\mc{D}_j$ for each variable in Tab.~\ref{tab:pse_uncertainty}. The two operations are independent, and $\mr{K}_{\m x\m x}[k]$ does not depend on the choice between Gaussian, log-normal, and truncated Gaussian families; the adopted distribution model can be replaced without recomputing $\mr{K}_{\m x\m x}[k]$.} That being said, unlike conventional sensitivity based first and second order moment mappings, the proposed DAE formulation considers $\m A[k]$ and propagates uncertainty across all states (measured and unmeasured) while accounting for all sources and types of uncertainty in the DAE and measurement model. 

\begin{algorithm}[t]
\caption{State Space Uncertainty Propagation Algorithm for Coupled Overland Flow and Infiltration Dynamics}\label{algorithm1}
\DontPrintSemicolon
\textbf{input:} nominal trajectory $\{\m x^0_k\}_{k=1}^{N_T}$\;
\textbf{input:} rainfall and parameter means\;
\textbf{input:} $\mr{Var}(\m R(k)),\mr{Var}(\m n),\mr{Var}(\m K_s),\mr{Var}(\m \psi_f),\mr{Var}(\m \eta),\mr{Var}(\m v_k)$\;
\textbf{input:} time horizon $N_T$\;
\textbf{initialize:} $\mr{K}_{\m x\m x}(0)\leftarrow \m 0$ // \textit{initial covariance at the initial step}\;
\For{$k = 1, 2, \ldots, N_T$}{
// \textit{linearization and covariance propagation at time index $k$}\;
\textbf{linearize:} DAE residual at $\m x^0_k$ and build $\m A[k]$\;
\textbf{assemble:} $\mr{K}_{\m b^{\mr{s}}\m b^{\mr{s}}}(k)$ and $\mr{K}_{\m b\m b}[k]$ via~\eqref{eq:Kbs_cov_pre} and~\eqref{eq:Kbb_stacked_pre} // \textit{build uncertainty contributions from $\m\theta$ and $\m v$}\;
\textbf{compute:} $\mr{K}_{\m x\m x}[k]$ via Proposition~\ref{thm:pse_solution},~\eqref{eq:Kxx} // \textit{solve covariance mapping}\;
\textbf{extract:} $\mr{K}_{\m x\m x}(k)$ as the leading $n_x\times n_x$ block of $\mr{K}_{\m x\m x}[k]$ // \textit{state covariance at time step $k$}\;
\textbf{extract:} state variances from $\mr{diag}\bigl(\mr{K}_{\m x\m x}(k)\bigr)$ // \textit{variance of each state variable}\;
}
\textbf{compute:} confidence intervals from configured distributions // \textit{for measured and unmeasured states}\;
\textbf{output:} $\{\mr{K}_{\m x\m x}(k)\}_{k=1}^{N_T}$ and intervals for all six states\;
\end{algorithm}

Furthermore, at each time step $k$, we stack $n_x$ linearized DAE equations with $n_y$ measurement equations, yielding $n_s = n_x + n_y$ equations for $n_x$ unknown states. Thus, under measurement, i.e., $n_y > 0$, the linear system $\m A[k] \m x[k] = \m b[k]$ is overdetermined, meaning that there are more equations than unknowns. This allows the covariance solution in~\eqref{eq:Kxx} to leverage the measurement information to reduce estimation uncertainty over the horizon $\{k,\ldots,N_T\}$, which is analogous to the covariance reduction effect of a Kalman update. However, for Kalman filters, this correction is performed recursively through a prediction correction update using the mismatch between measured and model predicted outputs~\cite{Kalman1960,Bartos2021a}. With that distinction, and based on Proposition~\ref{thm:pse_solution}, Algorithm~\ref{algorithm1} computes covariances for the state trajectory $\m x[k]$ over the horizon $\{k,\ldots,N_T\}$ by applying the closed-form mapping in~\eqref{eq:Kxx} to the stacked linearized DAE and measurement model, which is represented by $\m A[k]$ and $\mr{K}_{\m b\m b}[k]$.

The inputs for Algorithm~\ref{algorithm1} are the nominal trajectory $\{\m x_k^0\}_{k=1}^{N_T}$, the mean values of rainfall and hydraulic infiltration parameters, the selected uncertainty distributions as presented in Tab.~\ref{tab:pse_uncertainty}, the variance terms $\mr{Var}(\m R(k))$, $\mr{Var}(\m n)$, $\mr{Var}(\m K_s)$, $\mr{Var}(\m \psi_f)$, $\mr{Var}(\m \eta)$, and $\mr{Var}(\m v_k)$, and the simulation horizon $N_T$. \comment{Fig.~\ref{fig:workflow_pse} summarizes Algorithm~\ref{algorithm1}, depicting the inputs, the covariance terms $\mr{K}_{\m b^{\mr{s}}\m b^{\mr{s}}}(k)$ and $\mr{K}_{\m b\m b}[k]$ assembled at each time step, and the resulting state covariance $\mr{K}_{\m x\m x}[k]$ from which the confidence intervals are computed.} At each time step $k$, the algorithm linearizes the DAE around $\m x_k^0$, constructs the stacked linear system matrices, assembles the uncertainty covariance terms $\mr{K}_{\m b^{\mr{s}}\m b^{\mr{s}}}(k)$ and $\mr{K}_{\m b\m b}[k]$ from forcing/parameter and measurement uncertainty, and then computes $\mr{K}_{\m x\m x}[k]$ based on Equation~\eqref{eq:Kxx}. The covariance $\mr{K}_{\m x\m x}(k)$ at each time step is then extracted from the horizon covariance, and the corresponding state variances are obtained from its diagonal entries. The output of Algorithm~\ref{algorithm1} is the sequence $\{\mr{K}_{\m x\m x}(k)\}_{k=1}^{N_T}$ together with confidence intervals for water depth, directional discharges, and cumulative infiltration for both measured and unmeasured states. Thus, the algorithm provides a unified realtime state space covariance computation under partial watershed measurements.

The covariance computation in Section~\ref{subsubsec:cov_prop} provides $\mr{K}_{\m x\m x}(k)$ at each time step through the closed-form mapping in~\eqref{eq:Kxx}. Confidence intervals for measured and unmeasured states are then computed around the nominal trajectory $\{\m x_k^0\}_{k=1}^{N_T}$ using the adopted distribution model (Gaussian, log-normal, or truncated Gaussian). For each state variable $x_j(k)$, $j \in \{1,\ldots,n_x\}$, we can compute the confidence interval as follows
\begin{equation}\label{eq:normal_propagation}
x_j(k) \sim \mc{D}_j\!\left(x_j^0(k),\,\left[\mr{K}_{\m x\m x}(k)\right]_{jj}\right),
\end{equation}
where $\left[\mr{K}_{\m x\m x}(k)\right]_{jj}$ is the variance of the $j$th state variable. For the Gaussian distribution, $\mc{D}_j = \mc{N}$, and Equation~\eqref{eq:normal_propagation} reduces to $x_j(k) \sim \mc{N}\!\left(x_j^0(k), \left[\mr{K}_{\m x\m x}(k)\right]_{jj}\right)$. Thus, for each state variable in the system, we can compute a confidence interval at each time step $k$ based on the variance obtained from the covariance mapping in~\eqref{eq:Kxx} and the nominal trajectory. This allows quantifying uncertainty in both measured and unmeasured states under the proposed DAE framework, while accounting for all sources of uncertainty in the watershed dynamics model.

\begin{figure}[t]
\centering
\providecommand{\figeqref}[1]{\eqref{#1}}
\providecommand{\figsecref}[1]{\ref{#1}}
\resizebox{\columnwidth}{!}{%
\begin{tikzpicture}[
font=\small,
>={Stealth[length=2.2mm]},
stepbox/.style={
rectangle, rounded corners=2pt, draw=black!70, line width=0.5pt,
minimum height=18mm, minimum width=30mm, align=center,
inner sep=2.5pt, fill=#1, text width=28mm
},
outbox/.style={
rectangle, rounded corners=2pt, line width=0.5pt,
fill=white, align=center, inner sep=2.5pt,
font=\footnotesize\itshape, text width=28mm
},
flowarrow/.style={->, line width=0.7pt, draw=black!75},
feedbackarrow/.style={->, dashed, draw=black!55, line width=0.5pt},
]
\definecolor{Sub1purple}{RGB}{200,200,255}
\definecolor{Sub2yellow}{RGB}{251,234,206}
\definecolor{Sub3green}{RGB}{215,238,210}
\colorlet{phys}{Sub3green}
\colorlet{nominal}{Sub1purple}
\colorlet{linbox}{Sub1purple}
\colorlet{asmbox}{Sub2yellow}
\colorlet{covbox}{Sub2yellow}
\colorlet{ciband}{Sub3green}
\node[stepbox=phys] (s1) {%
\textbf{1.~DAE}~\figeqref{eq:semi_NDAE_rep}\\[1pt]
$\m E\,\dot{\m x}{=}\m f(\m x,\m\theta,R)$\\
$\m 0{=}\m g(\m x,\m\theta)$\\[1pt]
{\footnotesize $\m\mu_{\m\theta}$, $\mr{K}_{\theta}$~\figeqref{eq:Ktheta_pre}}
};
\node[stepbox=nominal, right=4mm of s1] (s2) {%
\textbf{2.~Nominal}\\[1pt]
BDF--Newton~\figeqref{eq:newton_iteration}\\[1pt]
$\{\m x_k^{0}\}_{k=1}^{N_T}$
};
\node[stepbox=linbox, right=4mm of s2] (s3) {%
\textbf{3.~Linearize}\\[1pt]
$\m A_k^{\m\psi},\;\m B_{\theta,k}$~\figeqref{eq:Ak_Btheta_def}\\[-1pt]
$\m A[k]\m x[k]{=}\m b[k]$\\[-1pt]
{\footnotesize\figeqref{eq:Ak_structure}}
};
\node[stepbox=asmbox, below=10mm of s3] (s4) {%
\textbf{4.~Assemble}\\[1pt]
$\mr{K}_{\m b\m b}[k]$\\[1pt]
{\footnotesize $\mr{K}_{\m x\m x}(k{-}1)$, $\mr{K}_{\theta}(k)$, $\mr{K}_{\m v\m v}(k)$}\\[-1pt]
{\footnotesize\figeqref{eq:Kbs_cov_pre},~\figeqref{eq:Kbb_stacked_pre}}
};
\node[stepbox=covbox, left=4mm of s4] (s5) {%
\textbf{5.~Covariance}\\[1pt]
$\mr{K}_{\m x\m x}[k]$~\figeqref{eq:Kxx}\\[1pt]
{\footnotesize Prop.~\figsecref{thm:pse_solution}}\\[-1pt]
{\footnotesize Extract $\mr{K}_{\m x\m x}(k)$}
};
\node[stepbox=ciband, left=4mm of s5] (s6) {%
\textbf{6.~Intervals}\\[1pt]
$\m x_k^{0}\pm z_{\alpha}\sqrt{\mathrm{diag}(\mr{K}_{\m x\m x}(k))}$\\[1pt]
{\footnotesize\figeqref{eq:normal_propagation}}\\[-1pt]
{\footnotesize Gauged \& ungauged}
};
\node[outbox, above=0mm of s6] (out) {%
Realtime PSE at all cells in $\mc{D}$
};
\draw[flowarrow] (s1) -- (s2);
\draw[flowarrow] (s2) -- (s3);
\draw[flowarrow] (s3) -- node[right, font=\footnotesize\itshape, text=black!65]
{for $k{=}1,\ldots,N_T$} (s4);
\draw[flowarrow] (s4) -- (s5);
\draw[flowarrow] (s5) -- (s6);
\draw[feedbackarrow] (s5.south) -- ++(0,-3.5mm) -| node[below, pos=0.25, font=\footnotesize]
{$\mr{K}_{\m x\m x}(k{-}1)$} (s4.south);
\end{tikzpicture}}
\caption{\comment{The proposed covariance propagation framework implemented by Algorithm~\ref{algorithm1}. The nonlinear DAE in~\eqref{eq:semi_NDAE_rep} generates the nominal trajectory $\{\m x_k^{0}\}_{k=1}^{N_T}$ through the BDF Newton iteration in~\eqref{eq:newton_iteration} and is linearized at $\m x_k^{0}$ to obtain the Jacobians in~\eqref{eq:Ak_Btheta_def}. The linearized DAE is stacked over the horizon $N_T$ together with the partial gauge measurements to form $\m A[k]\,\m x[k]=\m b[k]$ in~\eqref{eq:Ak_structure}, the uncertainty covariance terms $\mr{K}_{\m b^{\mr{s}}\m b^{\mr{s}}}(k)$ and $\mr{K}_{\m b\m b}[k]$ are assembled via~\eqref{eq:Kbs_cov_pre} and~\eqref{eq:Kbb_stacked_pre}, the closed-form covariance mapping in~\eqref{eq:Kxx} produces $\mr{K}_{\m x\m x}[k]$, and confidence intervals over $\m x_k^{0}$ are computed via~\eqref{eq:normal_propagation}.}}
\label{fig:workflow_pse}
\end{figure}

\section{Case study areas and data}\label{sec:study_areas}
In this manuscript, we consider two case studies to evaluate the proposed framework. The first case study is the synthetic V-Tilted catchment, which has been widely adopted as a benchmark for surface runoff models~\cite{Sulis2010,Kim2012a,Cozzolino2019,Gomes2023,Meles2024}. The second case study is the Walnut Gulch experimental watershed (WGEW), located in southeastern Arizona, USA~\cite{Skirvin2008,Nichols2008,Polyakov2018,Becker2018,Kautz2019,Meles2021,Meles2024}.

\subsection{V-Tilted Catchment}\label{subsec:vtilted_case_study} The synthetic V-Tilted catchment is depicted in Fig.~\ref{fig:vtitled}. This catchment consists of two rectangular hillslope planes ($800\;\mr{m} \times 1000\;\mr{m}$ each) coupled with a vegetated channel ($20\;\mr{m} \times 1000\;\mr{m}$) along the central axis. The total area of the catchment is $1620 \;\mr{m} \times 1000 \;\mr{m}=1.62\;\mr{km}^2$. The slope in the transverse direction ($x\mr{-}x$ direction) is $5\%$, while along the longitudinal direction ($y\mr{-}y$ direction) the slope is $2\%$. Two types of ground covers are considered with Manning roughness given as $n = 0.015\;\mr{s \cdot m^{-1/3}}$ for the hillslopes and $n = 0.15\;\mr{s \cdot m^{-1/3}}$ for the channel.

\begin{figure}[t]
\centering
\includegraphics[keepaspectratio=true,width=\columnwidth]{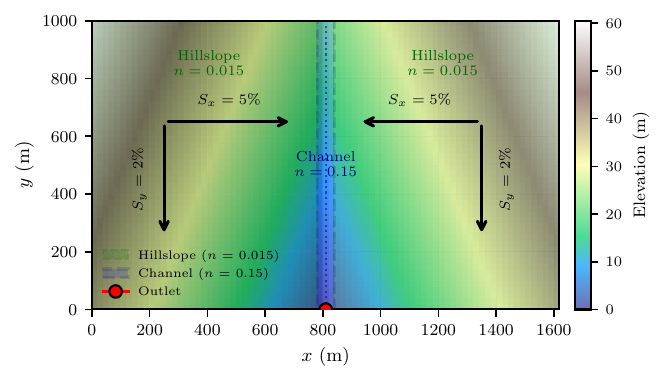}
\vspace{-0.7cm}
\caption{Numerical case study catchment (V-Tilted): synthetic topography with smooth hillslopes and a rougher central channel. The hillslopes and channel are assigned Manning roughness values of $n=0.015$ and $n=0.15$. The cross-slopes are $S_x=5\%$ and $S_y=2\%$. The outlet boundary condition is prescribed as normal depth with slope $0.02$. The grid resolution is $20~\mr{m}$.}\label{fig:vtitled}
\end{figure}

\begin{figure*}[t]
\centering
\includegraphics[keepaspectratio=true,width=1\textwidth]{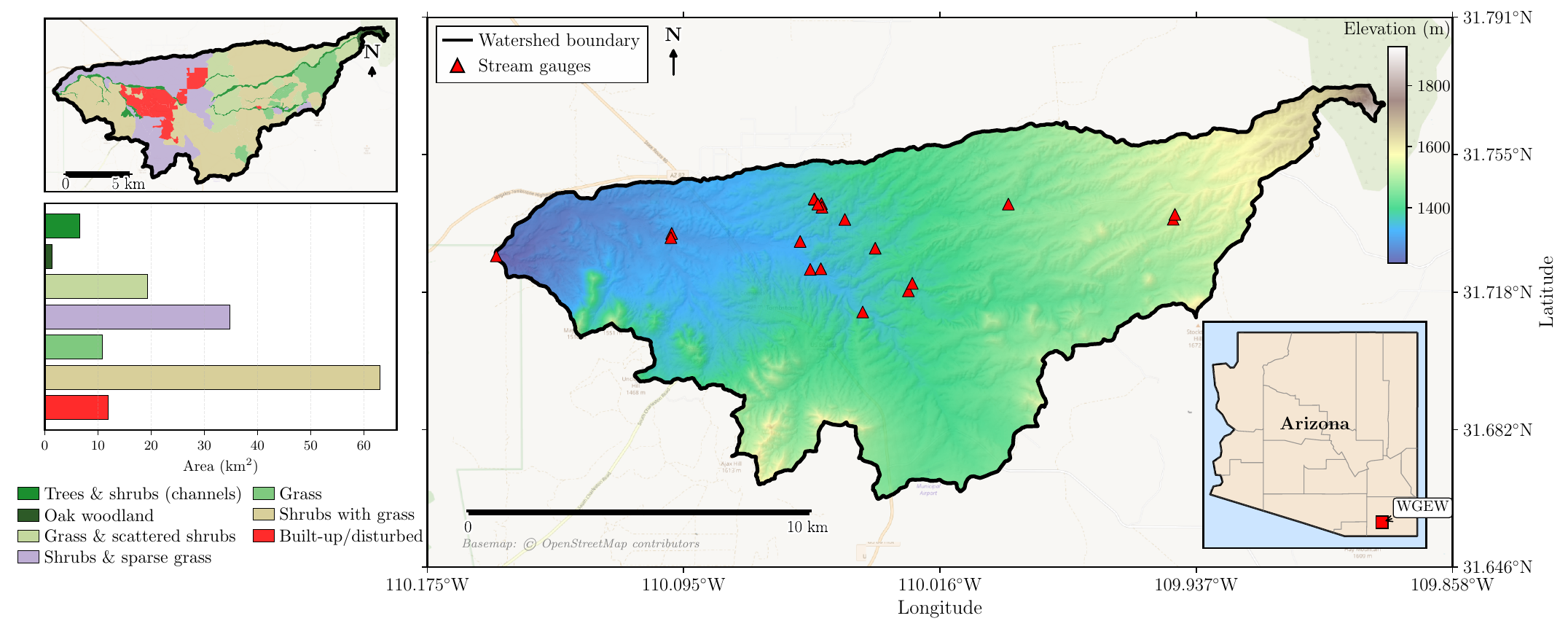}
\vspace{-0.5cm}
\caption{Walnut Gulch Experimental Watershed, Tombstone, Arizona, USA. 
The main panel shows the watershed boundary overlaid on the digital elevation model (DEM), with elevation ranging approximately from $1220\;\mr{m}$ to $1930\; \mr{m}$. Red triangular markers denote the locations of stream gauges within the watershed. The left panels illustrate the land use/land cover (LULC) classification and corresponding area distribution across the 24 vegetation cover types. The dominant classes are shrubs with grass ($63.0\;\mr{km}^2$, $42.7\%$), shrubs and sparse grass ($33.7\;\mr{km}^2$, $22.8\%$), and grass with scattered shrubs ($20.4\;\mr{km}^2$, $13.8\%$), together accounting for approximately $79\%$ of the watershed. Minor classes include built up and disturbed areas ($11.9\;\mr{km}^2$), grass ($10.8\;\mr{km}^2$), woody riparian cover along channels ($6.1\;\mr{km}^2$), and upland oak and juniper woodland ($1.8\;\mr{km}^2$). The inset map (bottom right) indicates the geographical location of WGEW within Arizona, USA.}
\label{fig:walnut_gulch_spatial_overview}
\vspace{-0.5cm}
\end{figure*}

The catchment domain $\mc{D}_{\mr{vtilted}}\subset\mbb{R}^2$ is represented by the cell graph $\mc{G}=(\mc{K}_{\mr{vtilted}},\mc{E})$, where the cell set $\mc{K}_{\mr{vtilted}}=\{(i,j):1\le i\le 81,\;1\le j\le 50\}$ has cardinality $K=4{,}050$ (cell size $\Delta x=\Delta y=20\;\mr{m}$). Under the discrete-time DAE formulation~\eqref{eq:bdf_descriptor}, the state vector $\m x=[\m x_d^\top,\m x_a^\top]^\top\in\Rn{n_x}$ collects the differential states $\m x_d=[\m h^\top,\m F^\top]^\top\in\Rn{n_d}$ (water depth and cumulative infiltration, $n_d=2K$) and the algebraic states $\m x_a=[\m Q_{\mr{L}}^\top,\m Q_{\mr{R}}^\top,\m Q_{\mr{U}}^\top,\m Q_{\mr{D}}^\top]^\top\in\Rn{n_a}$ (directional discharges, $n_a=4K$), yielding $n_x=n_d+n_a=6K=24{,}300$ coupled states over $\mc{K}_{\mr{vtilted}}$. Discharge gauges are placed on the hillslope cells at every $100\;\mr{m}$ in each lateral direction, with no gauges in the central channel, so that $\gamma_{Q_d}^{i,j}=1$ at the gauged hillslope cells and $\gamma_{Q_d}^{i,j}=0$ elsewhere; this defines the measurement matrix $\m C$ and output vector $\m y_k$ in~\eqref{eq:output}.

The outlet boundary condition is set to normal depth with longitudinal bed slope $S_y=0.02$ at the outlet cells $\mc{K}_{\mr{out}}\subseteq\mc{K}_{\mr{vtilted}}$~\cite{Gomes2023}.
Initial conditions are dry, with $h^{i,j}=0$ and $F^{i,j}=0$ for all $(i,j)\in\mc{K}_{\mr{vtilted}}$, and no imposed inflow boundary. The Green-Ampt setup used in this synthetic benchmark follows $K_s=15\;\mr{mm\,h^{-1}}$, $\psi_f=0\;\mr{mm}$, $\theta_i=0.15$, and $\theta_s=0.30$. For numerical simulation, the V-Tilted case uses BDF order one with $\Delta t=1\;\mr{s}$, Newton tolerance $10^{-10}$, over a $1.5\;\mr{h}$ horizon. These conditions are consistent with the benchmark use of V-Tilted in the aforementioned studies, and they provide a controlled scenario to evaluate the agreement between the DAE solution and independent solvers. The DAE solution is validated against HydroPol2D's explicit CA routing and the local inertia formulation~\cite{Gomes2024a}, which are solvers that have been previously validated against analytical solutions and experimental data for 2D coupled overland flow and infiltration modeling. The V-Tilted case serves as a controlled reference scenario to evaluate solver agreement and uncertainty propagation under minimal heterogeneity, while the Walnut Gulch case provides a realistic benchmark for assessing scalability and performance under distributed and complex watershed conditions.

\subsection{Walnut Gulch Experimental Watershed}\label{subsec:walnut_gulch_case_study}
The second case study is the WGEW, shown in Fig.~\ref{fig:walnut_gulch_spatial_overview}, located in southeastern Arizona, USA. The WGEW is a semiarid basin covering approximately $150\;\mr{km}^2$ and is managed by the USDA-ARS Southwest Watershed Research Center~\cite{Nichols2008}. The watershed is characterized by diverse land cover, with shrubland and grassland accounting for approximately $65\%$ of the basin area, as shown in Fig.~\ref{fig:walnut_gulch_spatial_overview}; permanent vegetation transects and a spatially distributed land cover map are available from long term monitoring records~\cite{Skirvin2008}. Soils are spatially heterogeneous, with hydraulic properties varying across the watershed as a function of land cover and subsurface texture~\cite{Nichols2008}. 

The catchment domain $\mc{D}_{\mr{wg}}\subset\mbb{R}^2$ is represented by the cell graph $\mc{G}=(\mc{K}_{\mr{wg}},\mc{E})$, where the cell set $\mc{K}_{\mr{wg}}=\{(i,j):1\le i\le 242,\;1\le j\le 521\}$ has cardinality $K=126{,}082$ (cell size $\Delta x\approx50\;\mr{m}$, $\Delta y\approx52\;\mr{m}$, covering $149\;\mr{km}^2$). Under the discrete-time DAE formulation~\eqref{eq:bdf_descriptor}, we obtain $n_x=n_d+n_a=6K=756{,}492$ coupled states over $\mc{K}_{\mr{wg}}$. All geospatial inputs are obtained from the USDA-ARS SWRC DAP archive~\cite{Nichols2008}. The terrain model is based on a USGS 10 m DEM, resampled to the model grid and conditioned by including the SWRC streamline network to enforce hydrological connectivity; the watershed outlet is fixed at Flume~1. The LULC is derived from the SWRC vegetation polygon map, which classifies the basin into 24 cover types~\cite{Skirvin2008}; each cover type is rasterized to the model grid and assigned a Manning roughness value $n$ per the SWRC vegetation classification~\cite{Skirvin2008}, resulting in a $5\times$ spatial variation in $n$ across the watershed ($n=0.015$ to $0.078\;\mr{s\,m^{-1/3}}$). Soil hydraulic parameters ($K_s$, $\psi_f$, $\theta_s$) are obtained from SSURGO polygon data available through the same archive~\cite{Nichols2008}. \comment{The SSURGO based assignment of $K_s$, $\psi_f$, and $\theta_s$ across Walnut Gulch encodes the spatial heterogeneity of the soil hydraulic parameters used in the Green-Ampt model.}

The parameter ranges used to construct the spatial fields in the DAE simulations are reported in Tab.~\ref{tab:walnut_gulch_params}; it defines the nominal fields for the deterministic simulation and the baseline values for the uncertain parameters $n$, $K_s$, $\psi_f$, and $\eta$ introduced in Section~\ref{subsec:pse}. \comment{We note that the uniform value $\theta_i=0.15$ in Tab.~\ref{tab:walnut_gulch_params} sets the nominal initial moisture state used to construct the operating trajectory $\m x_k^0$; uncertainty in $\theta_i$ is propagated through the uncertain moisture deficit $\eta^{i,j}=\theta_s^{i,j}-\theta_i^{i,j}$ listed in Tab.~\ref{tab:pse_uncertainty}, since $\partial\eta/\partial\theta_i=-1$.}
The spatial heterogeneity of Manning roughness reflects the land cover classes shown in Fig.~\ref{fig:walnut_gulch_spatial_overview}. The corresponding watershed composition is detailed in the same figure. The DAE simulation uses BDF order one with $\Delta t=1\;\mr{s}$, Newton tolerance $10^{-5}$, over a simulation horizon of $90\;\mr{min}$. The DAE solution is validated against HydroPol2D's explicit CA routing and the local inertia formulation~\cite{Gomes2024a}; this case study evaluates the scalability of the proposed framework to $n_x=756{,}492$ differential and algebraic state variables under realistic distributed heterogeneous conditions. The watershed datasets are available from the USDA-ARS Southwest Watershed Research Center through the DAP web interface at \url{http://www.tucson.ars.ag.gov/dap/}.

\begin{table}[t]
\vspace*{0.15cm}
\fontsize{9}{9}\selectfont
\centering
\caption{WGEW parameter ranges used in the DAE simulations. Vegetation layers are assembled from WGEW processed records and vegetation studies~\cite{Skirvin2008}. Soil hydraulic parameters are assigned from SSURGO based soil mapping and USDA ARS SWRC records used in the WGEW data assembly~\cite{Nichols2008}.}
\label{tab:walnut_gulch_params}
\renewcommand{\arraystretch}{1.1}
\resizebox{\linewidth}{!}{%
\begin{tabular}{@{}l p{0.28\linewidth} l p{0.34\linewidth}@{}}
\toprule
Symbol & Description & Unit & Range \\
\midrule
$n$ & Manning roughness & $\mr{s\,m^{-1/3}}$ & $0.015$-$0.078$ \\
$K_s$ & Saturated hydraulic conductivity & $\mr{m\,s^{-1}}$ & $3.67\times 10^{-6}$-$5.83\times 10^{-5}$ \\
$\psi_f$ & Wetting front suction head & $\mr{m}$ & $0.049$-$0.110$ \\
$\theta_s$ & Saturated water content & $[-]$ & $0.437$-$0.463$ \\
$\theta_i$ & Initial water content & $[-]$ & $0.150$ (uniform) \\
$\eta=\theta_s-\theta_i$ & Moisture deficit & $[-]$ & $0.287$-$0.313$ \\
\toprule \bottomrule
\end{tabular}}
\end{table}

\section{Results and discussion} \label{sec:results}
In this section, we evaluate the 2D overland flow and infiltration DAE model and the proposed uncertainty framework on the two watersheds. All simulations are performed in MATLAB R2024b on a MacBook Pro with an Apple M1 Pro chip, a 10-core CPU, and 16 GB RAM. The objectives of the case studies are outlined as follows.
\vspace{-0.2cm}
\begin{itemize}[leftmargin=*]
\item To assess whether the proposed DAE formulation produces a hydrologic and hydrodynamic response consistent with that predicted by independent baseline solvers, specifically the explicit CA routing and local inertia formulations discussed in Section~\ref{subsec:dae} for both the synthetic V-Tilted and the real WGEW watersheds.\vspace{-0.3cm}
\item To quantify how uncertainty in rainfall intensity and watershed parameters propagates to water depth, infiltration, and directional discharge states through the proposed uncertainty framework in Section~\ref{subsec:pse}.\vspace{-0.3cm}
\item To characterize the relative influence of uncertainty sources across space and time and to evaluate their impact on the confidence intervals of hydraulic state variables in measured and unmeasured locations.\vspace{-0.3cm}
\item To evaluate the computational scalability of the proposed framework for large heterogeneous domains under time-varying rainfall forcing, using the WGEW as a realistic benchmark.
\end{itemize}

\begin{figure*}[h]
\centering
\subfloat{\includegraphics[keepaspectratio=true,width=\columnwidth]{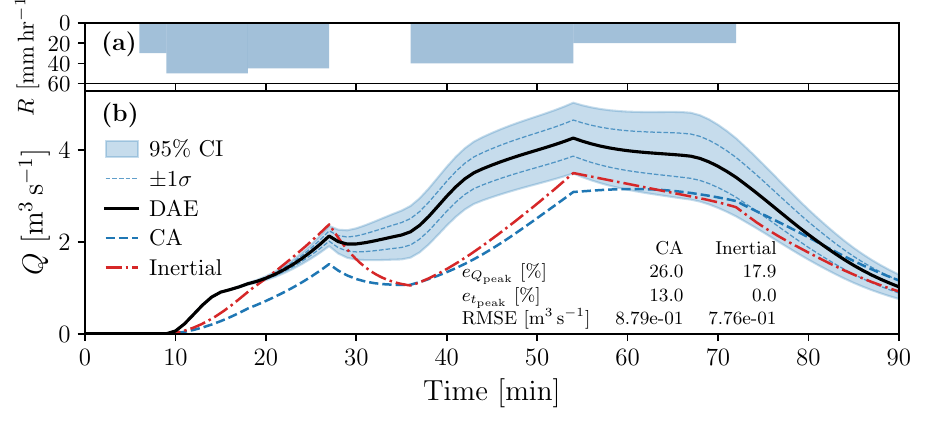}\label{fig:solver_comparison_vtilted}}{}{}
\subfloat{\includegraphics[keepaspectratio=true,width=\columnwidth]{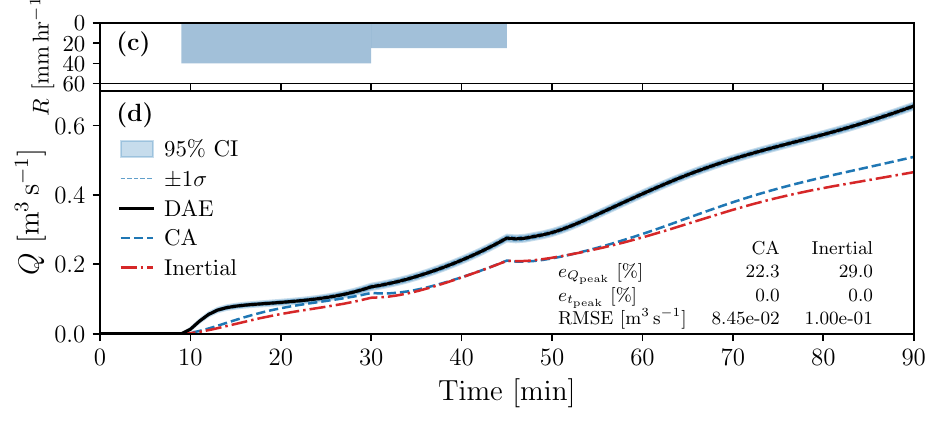}\label{fig:solver_comparison_walnut_gulch}}{}{}
\caption{Hydrograph comparison for the V-Tilted and Walnut Gulch watersheds under the corresponding rainfall patterns $\m R(k)$ [$\mr{mm\,h^{-1}}$]. \comment{The comparison shows numerical agreement between the proposed DAE formulation and the explicit CA and semi-implicit local inertia baselines under the same governing equations; differences in peak magnitude follow from the simultaneous Newton solve in~\eqref{eq:newton_iteration}.} (a) V-Tilted rainfall pattern $\m R(k)$ and (b) the respective outlet discharge trajectories from the proposed DAE formulation, explicit CA routing, and local inertia solvers. (c) Walnut Gulch rainfall pattern $\m R(k)$ and (d) the respective outlet discharge trajectories from the proposed DAE formulation, explicit CA routing, and local inertia solvers.}\label{fig:solver_comparison}
\vspace{-0.3cm}
\end{figure*}
\subsection{DAE formulation response and computational scalability}\label{subsec:results_solver}

\begin{figure*}[t]
\centering
\includegraphics[keepaspectratio=true,width=\textwidth]{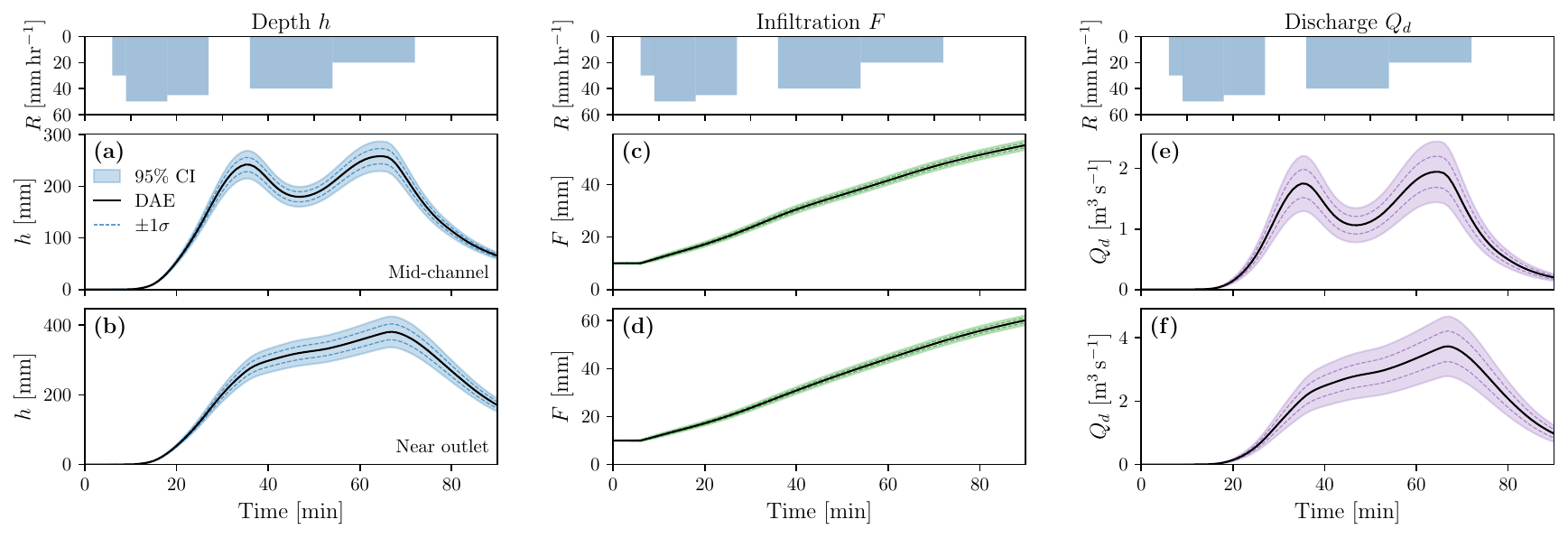}
\caption{Time domain uncertainty envelopes for the V-Tilted case study at the mid-channel and near-outlet cells. The figure shows the nominal DAE trajectory~$\m x_k^0$, the $\pm1\sigma$ band, and the $95\%$ confidence interval for water depth~$h$ [mm], cumulative infiltration~$F$ [mm], and directional discharge~$Q_d$ [m$^3$\,s$^{-1}$] at each location, with the corresponding event rainfall~$R(k)$ [mm\,h$^{-1}$] shown in each column. Confidence interval width grows during the active runoff period and narrows during recession, consistent with the covariance mapping in~\eqref{eq:Kxx}, with wider envelopes at the near-outlet cell reflecting upstream uncertainty accumulation through the coupled routing dynamics in~\eqref{eq:diff_states} and~\eqref{eq:algebraic_constraint}.}\label{fig:pse_envelopes_vtilted}
\end{figure*}

\begin{figure*}[h]
\centering
\includegraphics[keepaspectratio=true,width=\textwidth]{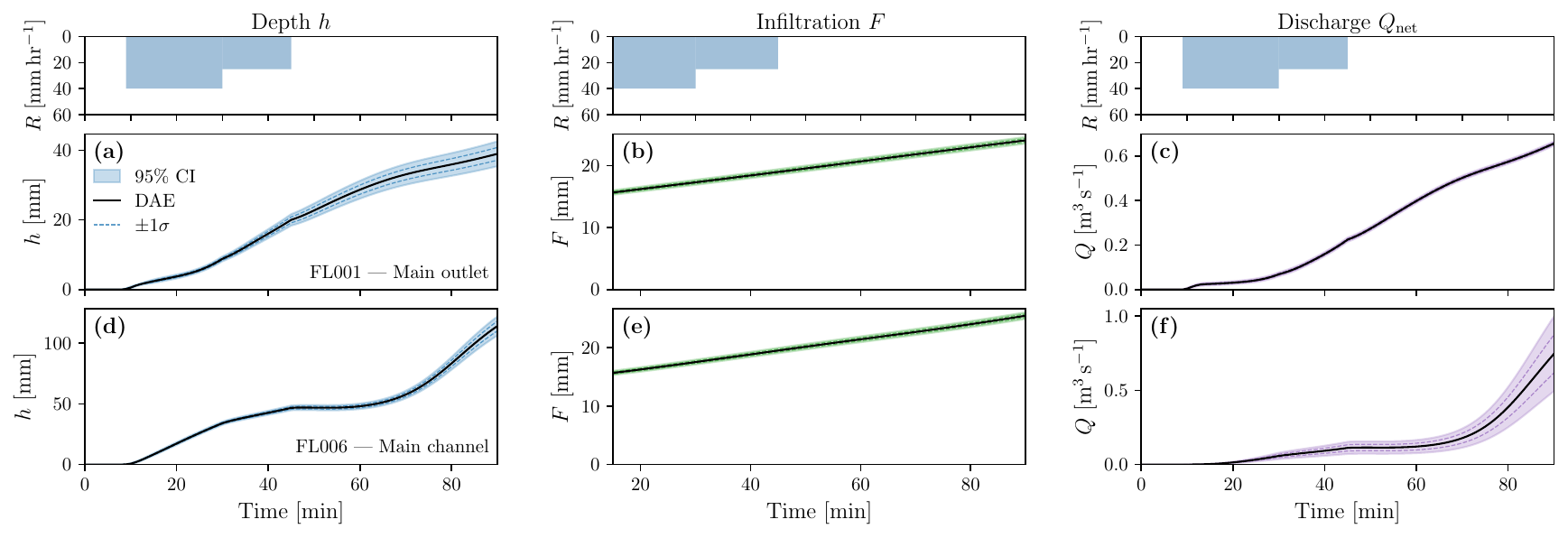}
\caption{Time domain uncertainty envelopes for Walnut Gulch at FL001 (main outlet) and FL006 (main channel). The figure depicts the nominal DAE trajectory~$\m x_k^0$, the $\pm1\sigma$ band, and the $95\%$ confidence interval for water depth~$h$ [mm], cumulative infiltration~$F$ [mm], and directional discharge~$Q_d$ [m$^3$\,s$^{-1}$] at each location, with the associated event rainfall~$R(k)$ [mm\,h$^{-1}$] shown in each column. The covariance propagation framework in~\eqref{eq:Kxx} produces uncertainty envelopes at both gauged locations under the distributed rainfall and heterogeneous parameter fields of the real watershed, with interval widths that evolve in response to the transient runoff dynamics resolved through the coupled DAE structure in Section~\ref{subsec:dae}.}\label{fig:pse_envelopes_walnut_gulch}
\end{figure*}

The first objective of the case studies is to assess whether the proposed semi-explicit DAE formulation, introduced in Section~\ref{subsec:dae}, produces a hydrologic and hydrodynamic response consistent with established baseline solvers. Outlet hydrograph comparisons are carried out against HydroPol2D's explicit CA routing~\cite{Gomes2023} and local inertia~\cite{Gomes2024a} formulations for both watersheds. Response validity is quantified using the relative peak flow error, the time-to-peak error, and the outlet discharge RMSE over horizon $N_T$. The relative peak flow error is defined as the absolute difference between the DAE predicted peak discharge $Q_{\mr{peak}}^{\mr{DAE}}$ and the reference peak discharge $Q_{\mr{peak}}^{\mr{ref}}$ from the baseline solver, normalized by the maximum of the absolute reference peak discharge and a small regularization constant $\varepsilon_Q$ to avoid division by zero. The time-to-peak error is defined similarly, comparing the DAE predicted time to peak $t_{\mr{peak}}^{\mr{DAE}}$ with the reference time to peak $t_{\mr{peak}}^{\mr{ref}}$, normalized by the maximum of the reference time to peak and a regularization constant $\varepsilon_t$. The measures can be written as follows.
\begin{equation}\label{eq:metric_peak}
e_{Q_{\mr{peak}}}=\frac{\left|Q_{\mr{peak}}^{\mr{DAE}}-Q_{\mr{peak}}^{\mr{ref}}\right|}{\max\!\left(\left|Q_{\mr{peak}}^{\mr{ref}}\right|,\varepsilon_Q\right)},\qquad
e_{t_{\mr{peak}}}=\frac{\left|t_{\mr{peak}}^{\mr{DAE}}-t_{\mr{peak}}^{\mr{ref}}\right|}{\max\!\left(t_{\mr{peak}}^{\mr{ref}},\varepsilon_t\right)},
\end{equation}
and the outlet discharge RMSE over horizon $N_T$,
\begin{equation}\label{eq:metric_rmse}
\mr{RMSE}_{\m Q}=\sqrt{\frac{1}{N_T}\sum_{k=1}^{N_T}\left\|\m Q_{\mr{DAE}}(k)-\m Q_{\mr{ref}}(k)\right\|_2^2}.
\end{equation}

We note that the Nash-Sutcliffe Efficiency (NSE) and Kling-Gupta Efficiency (KGE) \comment{are computed alongside the absolute error metrics in this work. As discussed in~\cite{Williams2025}, NSE and KGE are relative metrics whose values depend on the variability and temporal correlation of the reference hydrograph rather than only on the absolute prediction error; the reply in~\cite{Clark2026} notes that, when reported together with absolute error metrics, NSE and KGE decompose the solver to reference mismatch into correlation, variability, and bias components $(r,\alpha,\beta)$.} Solver agreement is assayed using $e_{Q_{\mr{peak}}}$, $e_{t_{\mr{peak}}}$, $\mr{RMSE}_{\m Q}$\comment{, NSE, and KGE}, where $e_{Q_{\mr{peak}}}$ and $e_{t_{\mr{peak}}}$ characterize peak and timing mismatch in normalized terms, while $\mr{RMSE}_{\m Q}$ quantifies outlet discharge trajectory mismatch in physical units. For uncertainty quantification, we report mean trajectories, variance envelopes, and confidence intervals extracted from $\mr{K}_{\m x\m x}(k)$ as explained in Section~\ref{subsec:pse}.
In this subsection, hydrograph agreement is evaluated using deterministic outlet discharge trajectories from the DAE formulation against deterministic CA routing and local inertia baseline trajectories, without covariance bounds.

Fig.~\ref{fig:solver_comparison_vtilted} shows the V-Tilted outlet hydrograph comparison. The local inertia baseline recovers peak timing relative to the DAE trajectory ($e_{t_{\mr{peak}}}=0.0\%$), while both baseline solvers show peak magnitude differences ($e_{Q_{\mr{peak}}}=17.9\%$ for local inertia and $e_{Q_{\mr{peak}}}=26.0\%$ for CA); the CA baseline also shows a peak timing difference ($e_{t_{\mr{peak}}}=13.0\%$). The RMSE values are $7.76\times 10^{-1}\;\mr{m^3\,s^{-1}}$ for local inertia and $8.79\times 10^{-1}\;\mr{m^3\,s^{-1}}$ for CA, consistent with the expected behavior in a controlled synthetic setting with minimal geometric and parameter heterogeneity. \comment{For this case, the CA routing baseline yields $\mr{NSE}=0.597$ and $\mr{KGE}=0.622$ (with correlation $r=0.924$, variability ratio $\alpha=0.772$, and bias ratio $\beta=0.709$), while the local inertia baseline yields $\mr{NSE}=0.686$ and $\mr{KGE}=0.659$ (with $r=0.939$, $\alpha=0.775$, $\beta=0.751$).} The higher peak flow in the DAE trajectory relative to both baselines is attributed to the simultaneous solution of routing and infiltration within each time step. In the explicit CA scheme and the semi-implicit local inertia formulation, infiltration and Manning routing are updated through separate operations within the time step: infiltration removes water from ponded depth before lateral concentration from upstream cells, while routing redistributes flow without a concurrent infiltration loss in the same coupled solve. This sequential decoupling can attenuate outlet peak discharge. However, in the DAE formulation, the mass balance~\eqref{eq:depth_update}, infiltration~\eqref{eq:infil_update}, and algebraic Manning constraints~\eqref{eq:algebraic_constraint} are enforced simultaneously through the Newton solve~\eqref{eq:newton_iteration}, so the infiltration rate $\m f(t)$ at each cell is evaluated at a water depth that satisfies the lateral routing balance and the vertical infiltration loss within the same step, resulting in lower numerical attenuation and higher peak concentration at the outlet. \comment{This result follows from solving the differential and algebraic constraints simultaneously in the Newton iteration~\eqref{eq:newton_iteration} and characterizes the agreement between the DAE formulation and the baseline solvers under the same governing equations; it is not an indication that the DAE trajectory is closer to the true catchment response, which would require comparison against field measured discharge.} This result validates the DAE structure established in Proposition~\ref{thm:index1_local} and confirms that the BDF implicit discretization in~\eqref{eq:bdf_descriptor} accurately preserves the mass balance, Manning routing, and Green-Ampt infiltration interactions formulated in Section~\ref{subsec:hydropol2d}.

Fig.~\ref{fig:solver_comparison_walnut_gulch} shows the Walnut Gulch outlet hydrograph comparison. Both baselines recover peak timing ($e_{t_{\mr{peak}}} = 0.0\%$), while peak magnitude differences are larger than in the synthetic case ($e_{Q_{\mr{peak}}} = 20.8\%$ for CA and $e_{Q_{\mr{peak}}} = 27.5\%$ for local inertia), reflecting the influence of distributed topographic, vegetation, and soil heterogeneity on routing approximations across the $149\;\mr{km}^2$ basin. The same sequential explicit and semi-implicit numerical treatment used in the V-Tilted case contributes to the higher DAE peak. Under distributed conditions, the discrepancy is larger because the nonlinear interaction between spatially variable infiltration capacity ($K_s$, $\psi_f$) and Manning roughness is accounted for jointly in the DAE Newton step~\eqref{eq:newton_iteration}, while the explicit and semi-implicit simulations decouple these interactions within each time step. The RMSE values remain moderate ($7.68\times 10^{-2}\;\mr{m^3\,s^{-1}}$ for CA and $9.24\times 10^{-2}\;\mr{m^3\,s^{-1}}$ for local inertia), indicating that the DAE descriptor structure in~\eqref{eq:semi_NDAE_rep} remains consistent with established solvers under realistic heterogeneous watershed conditions. \comment{Furthermore, the CA routing baseline yields $\mr{NSE}=0.835$ and $\mr{KGE}=0.668$ (with $r=0.997$, $\alpha=0.773$, $\beta=0.759$), while the local inertia baseline yields $\mr{NSE}=0.769$ and $\mr{KGE}=0.605$ (with $r=0.998$, $\alpha=0.729$, $\beta=0.714$). The high correlation across the watersheds indicates that the baselines preserve the timing and shape of the DAE outlet hydrograph, while the $\alpha,\beta<1$ ratios are consistent with the lower peak magnitude and lower mean discharge of the baseline solvers relative to the DAE trajectory.}

The above case studies illustrate the scalability of the proposed DAE simulation. For the V-Tilted case, the state dimension is $n_x=24{,}300$ states, while for Walnut Gulch it is $n_x=756{,}492$ states. This corresponds to an increase in state dimension by a factor of approximately $31.1$ from the synthetic to the real watershed. The solver runtimes for the implicit BDF method are $22.34\;\mr{s}$ for V-Tilted and $491.90\;\mr{s}$ for Walnut Gulch, which reflects a moderate increase in runtime relative to the increase in state dimension. This scalability is attributed to the sparsity of the Jacobian matrix $\m J_{k+1}^{(i)}$ used in the Newton iteration, which exploits the local connectivity structure of the cell graph $\mc{G}$. \comment{The stacked system matrix $\m A[k]$ in the covariance mapping~\eqref{eq:Kxx} inherits this sparsity. The Jacobian $\m A_k^{\m\psi}$ couples a central cell to at most its four von Neumann neighbors, thus the number of nonzero entries per row is bounded by a small constant independent of $K$, and the total number of nonzero entries in $\m A[k]$ grows linearly with $K$. The block structure in~\eqref{eq:Ak_structure} enables the covariance computation to be decomposed and evaluated independently at each cell without assembling the global stacked matrix $\m A[k]$ explicitly.} However, it is important to note that the DAE BDF solver requires more computational effort per time step than the explicit CA routing and local inertia baselines, which have runtimes of $3.22\;\mr{s}$ and $10.81\;\mr{s}$ for V-Tilted, and $56.01\;\mr{s}$ and $195.59\;\mr{s}$ for Walnut Gulch, respectively. While the computational cost of the DAE solver is higher, the implicit Newton solve in~\eqref{eq:newton_iteration} enforces the coupled differential dynamics~\eqref{eq:diff_states} and the algebraic Manning discharge constraints~\eqref{eq:algebraic_constraint} simultaneously at each time step, rather than decoupling them as in the explicit/semi-implicit baselines. This simultaneous coupling is essential for the covariance propagation framework: the linearized DAE Jacobians $\m A_k^{\m\psi}$ and $\m B_{\theta,k}$ in~\eqref{eq:Ak_Btheta_def} are evaluated at operating points where the mass balance, infiltration, and flow constraints are jointly satisfied, ensuring that the propagated uncertainty in $\mr{K}_{\m x\m x}(k)$ reflects the full coupled system response. Uncertainty quantification is then evaluated in the remaining subsections using the covariance mapping in Proposition~\ref{thm:pse_solution} under the regularity and solvability conditions of Assumption~\ref{asmp:dae_regular}.

\subsection{Uncertainty propagation and confidence intervals}\label{subsec:results_uncertainty}
To address the second and third objectives of the case studies, uncertainty in rainfall forcing and hydraulic parameters $\{R, n, K_s, \psi_f, \eta\}$ is propagated simultaneously through the closed-form covariance mapping in~\eqref{eq:Kxx} for the distributions considered in Tab.~\ref{tab:pse_uncertainty}. In the numerical setup, the covariance model uses the following coefficients of variation for the parameters $(\mr{CV}_R, \mr{CV}_n, \mr{CV}_{K_s}, \mr{CV}_{\psi}, \mr{CV}_{\eta}) = (0.25, 0.20, 0.25, 0.10, 0.12)$, and the variances for the measurement noise are defined as $\mr{Var}(v_h) = 2.5 \times 10^{-7}\;\mr{m^2}$, $\mr{Var}(v_Q) = 1.0 \times 10^{-8}\;\mr{m^6\,s^{-2}}$, and $\mr{Var}(v_F) = 2.5 \times 10^{-5}\;\mr{m^2}$. \comment{Note that the confidence intervals are obtained under Assumption~\ref{asmp:independence}, which models the uncertain inputs and parameters as mutually independent. In a real watershed, positive spatial correlation in $K_s$, $\psi_f$, $\eta$, $n$, or rainfall would amplify the envelope width along drainage corridors, where correlated deviations in upstream cells propagate and amplify toward the outlet, while negative correlation in the same inputs would narrow it; the magnitude of this effect at gauged and ungauged locations under the resulting covariance matrices is not quantified in the present case studies.}

\begin{figure*}[t]
\centering
\includegraphics[keepaspectratio=true,width=0.98\textwidth]{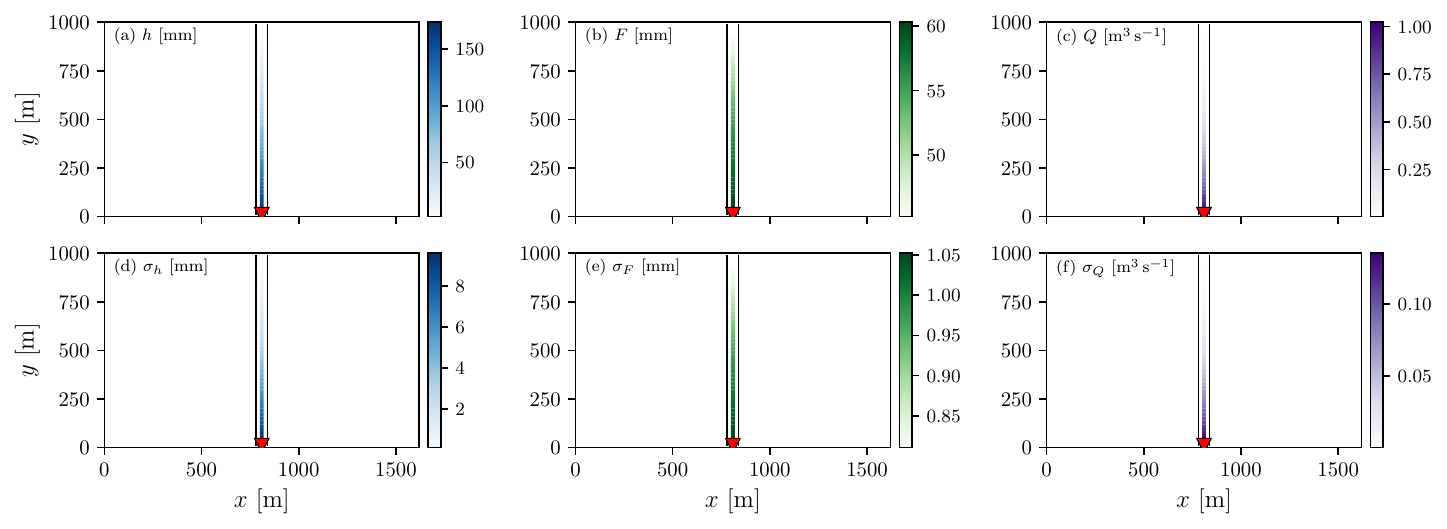}
\caption{Spatial uncertainty response for the V-Tilted catchment; the figure shows representative spatial snapshots at time $t=90\;\mr{min}$ for $h$, $F$, and $Q$, and the corresponding propagated uncertainties $\sigma_h$, $\sigma_F$, and $\sigma_Q$. The results show that uncertainty is not spatially uniform, with larger values along dominant routing paths and near the outlet.}
\label{fig:spatial_uncertainty_vtilted}
\end{figure*}

\begin{figure*}[t]
\centering
\includegraphics[keepaspectratio=true,width=\textwidth]{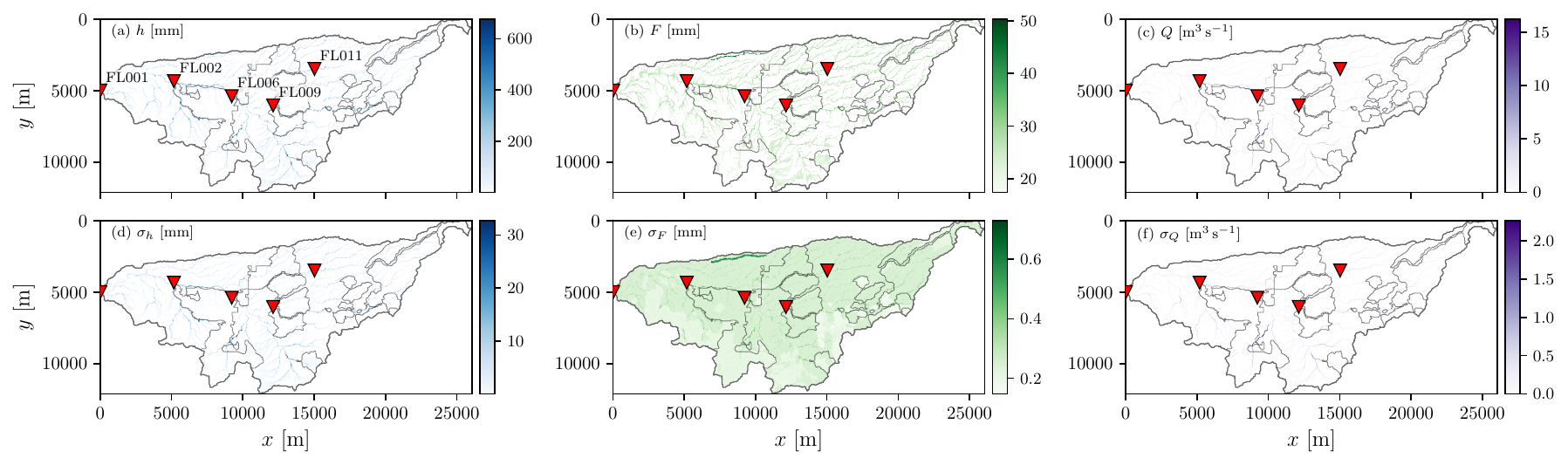}
\caption{Spatial uncertainty response for the Walnut Gulch watershed; the figure shows representative spatial snapshots at time $t=90\;\mr{min}$ for $h$, $F$, and $Q$, and the corresponding propagated uncertainties $\sigma_h$, $\sigma_F$, and $\sigma_Q$ over the heterogeneous basin. The results illustrate how topography, drainage structure, and rainfall pattern jointly shape spatial variance in predicted states. Relative to synthetic settings, the heterogeneous watershed exhibits stronger localization of high uncertainty regions along active channel pathways.}\label{fig:spatial_uncertainty_walnut_gulch}
\vspace{-0.2cm}
\end{figure*}

Fig.~\ref{fig:pse_envelopes_vtilted} and Fig.~\ref{fig:pse_envelopes_walnut_gulch} show the uncertainty envelopes for watershed state variables $\m h$, $\m F$, and $\m Q_d$ at selected cells. In both case studies, the confidence interval width increases during active runoff periods and decreases during recession. This behavior follows directly from the covariance mapping given in~\eqref{eq:Kxx}, where the state covariance $\mr{K}_{\m x\m x}(k)$ is driven by $\mr{K}_{\m b\m b}[k]$ through the stacked linearized DAE and measurement model assembled in~\eqref{eq:Ak_structure}. This means that the confidence intervals computed around the nominal trajectory $\m x_k^0$ via~\eqref{eq:normal_propagation} quantify uncertainty growth and decay relative to the event dynamics. At measured locations, the stacked system $\m A[k]\m x[k]=\m b[k]$ is overdetermined because $n_s=n_x+n_y>n_x$ equations are available for $n_x$ states, as discussed in Section~\ref{subsubsec:cov_prop}; consequently, the covariance solution in~\eqref{eq:Kxx} leverages the additional measurement equations to reduce state variance, producing tighter confidence intervals at gauged cells relative to ungauged locations where only the $n_x$ DAE equations constrain the solution.

The proposed method computes the uncertainty response for both measured and unmeasured states from the closed-form covariance map in~\eqref{eq:Kxx} under partial measurements, with uncertainty bounds defined by the linearization around the nominal trajectory $\m x_k^0$ as stated in Proposition~\ref{thm:pse_solution}. The spatial plots depicting the state variables and their corresponding propagated uncertainties at the final time step of the simulation are shown in Fig.~\ref{fig:spatial_uncertainty_vtilted} and Fig.~\ref{fig:spatial_uncertainty_walnut_gulch}. Notice that the propagated uncertainty is not uniformly distributed over the domain. For V-Tilted, the largest spatial variance for $\m h$ and $\m Q_d$ concentrates along dominant routing paths and near the outlet, where cumulative flow contributions from upstream cells are highest; this follows from the coupled differential and algebraic structure in~\eqref{eq:diff_states}-\eqref{eq:algebraic_constraint}, where overland flow routing and infiltration are solved simultaneously via the implicit Newton step~\eqref{eq:newton_iteration}, and upstream uncertainty accumulates through the directional discharge interactions defined by~\eqref{eq:inflow}. For WGEW, high variance regions are localized along active drainage and channel segments, while upland and low response areas remain comparatively low variance; this spatial pattern reflects the distributed topographic and drainage structure of the watershed, where flow convergence and routing interactions amplify uncertainty along active pathways while low response areas are less sensitive to rainfall and parameter uncertainty. 

We note here that the covariance propagation runtimes for the presented runs are $11.67\;\mr{s}$ for V-Tilted and $182.60\;\mr{s}$ for WGEW. Although the DAE solver requires more computational time than the explicit CA routing and local inertia baselines, as discussed earlier, the covariance propagation can be evaluated in realtime alongside the DAE trajectory. This additional computational cost is relatively lower than that required to obtain uncertainty estimates through MC ensemble methods, which requires repeated model evaluations (as shown in the subsequent section). Thus, the proposed covariance propagation method provides a computationally efficient approach for realtime uncertainty quantification for 2D overland flow and infiltration simulation over the domain $\mc{D}$, enabling confidence aware estimates of watershed conditions without the need for multiple model runs, which can be limiting for large state dimensions such as in the WGEW case.

\begin{figure}[t]
\centering
\includegraphics[keepaspectratio=true,width=\columnwidth]{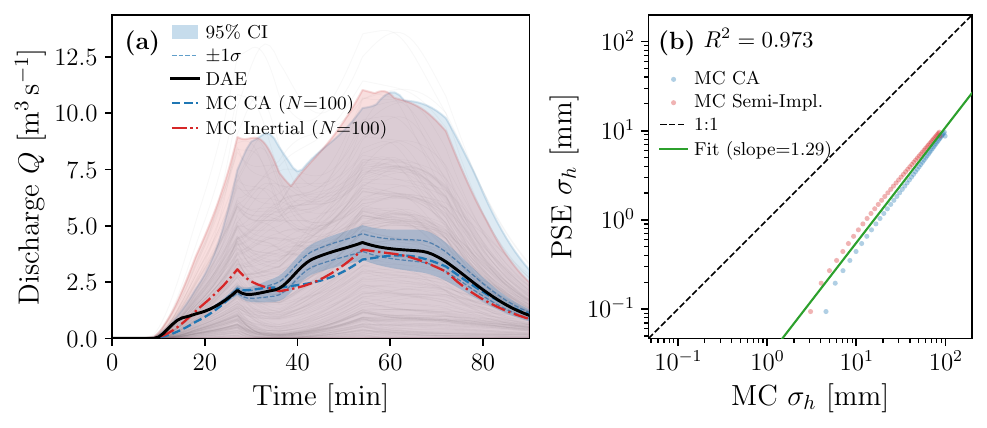}
\caption{MC simulations for the V-Tilted case; (a) compares nominal DAE outlet discharge response and covariance based confidence intervals against MC hydrograph ensembles; (b) compares uncertainty for $\m h$ from covariance propagation and MC ensembles.}\label{fig:mc_validation_vtilted}
\end{figure}

\subsection{MC ensemble validation}\label{subsec:results_mc_validation}
Here, we show that for MC ensemble hydrograph realizations generated using HydroPol2D's CA routing and the semi-implicit formulation~\cite{Gomes2023,Gomes2024a}, the uncertainty bounds obtained from the closed-form covariance mapping in Proposition~\ref{thm:pse_solution} and Algorithm~\ref{algorithm1} are consistent with those obtained from MC ensemble based simulations, from the synthetic benchmark to the large heterogeneous watershed. This comparison directly evaluates whether the covariance solution from Problem~\ref{prob:pse_cov_linear_form} reproduces ensemble based uncertainty statistics at the outlet and over the spatial domain. Fig.~\ref{fig:mc_validation_vtilted}(a) depicts the $100$ MC simulations under the CA routing and semi-implicit baselines for the V-Tilted case, while Fig.~\ref{fig:mc_validation_walnut_gulch}(a) shows the CA routing MC simulations for WGEW. Note that we did not simulate the semi-implicit MC for WGEW due to the computational cost of repeated model evaluations for $N_{\mr{MC}}=100$. Furthermore, we use Latin Hypercube Sampling (LHS) to generate the $100$ realizations for the uncertain parameters and rainfall, ensuring a more efficient exploration of the uncertainty space compared to simple random sampling~\cite{Helton2003}. The MC hydrograph ensembles are compared against the nominal DAE trajectory and the covariance based confidence intervals at the outlet.

The computed covariances from the proposed framework are within the range of the MC ensemble trajectories. In Fig.~\ref{fig:mc_validation_vtilted}(b) and Fig.~\ref{fig:mc_validation_walnut_gulch}(b), the $R^2$ value is computed between covariance propagated variance and MC sample variance, evaluated over all the cells $\mc{K}$ within the watershed domain $\mc{D}$. The $R^2$ quantifies how much of the spatial variance pattern from MC is explained by the covariance mapping, while the fitted line is interpreted jointly through slope and intercept in the log scale relation $\sigma_{\mr{PSE}}=10^b \sigma_{\mr{MC}}^{a}$.

For V-Tilted, Fig.~\ref{fig:mc_validation_vtilted}(b) gives $R^2=0.973$ with $(a,b)=(1.29,-1.55)$, while for Walnut Gulch, Fig.~\ref{fig:mc_validation_walnut_gulch}(b) gives $R^2=0.852$ with $(a,b)=(1.05,-0.96)$. These values indicate strong spatial pattern consistency in both watersheds, while the negative intercept shifts the fitted relation below the line with slope one. This is due to the variance reduction effect from sensing in the proposed framework being a probabilistic state estimation formulation under partial measurements, where the stacked system in~\eqref{eq:Ak_structure} is overdetermined ($n_s=n_x+n_y>n_x$), and the covariance solution in~\eqref{eq:Kxx} is measurement conditioned, analogous to covariance reduction in Kalman based state estimation. By contrast, the MC variance here is obtained from system realizations without such measurement conditioning.

\begin{figure}[t]
\centering
\includegraphics[keepaspectratio=true,width=\columnwidth]{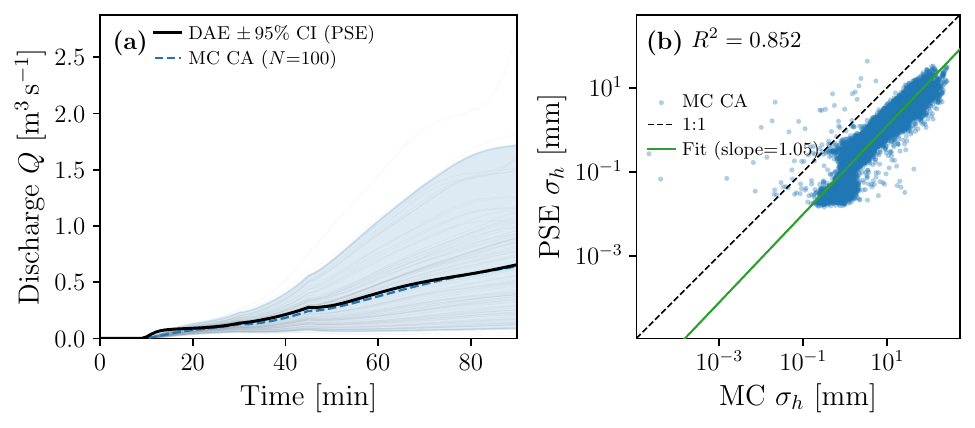}
\caption{MC simulations for Walnut Gulch; (a) compares nominal outlet discharge trajectories and covariance based confidence intervals with MC hydrograph realizations; (b) compares uncertainty for $\m h$ from covariance propagation and ensemble estimates.}\label{fig:mc_validation_walnut_gulch}
\end{figure}

\comment{To assess the MC distributions relative to the covariance based intervals, we report the MC sample skewness $\gamma_1$ and excess kurtosis $\gamma_2$ of the outlet discharge $Q$ at the peak time, along with the empirical coverage, defined as the fraction of MC realizations of $Q$ at the peak that fall within the covariance based confidence interval, against the nominal $95\%$ level. For V-Tilted, the peak occurs at $t=3720\;\mr{s}$ with $\gamma_1=+0.68$ and $\gamma_2=-0.32$, and an empirical coverage of $95.0\%$; for Walnut Gulch, the peak occurs at $t=5400\;\mr{s}$ with $\gamma_1=+1.29$ and $\gamma_2=+2.54$, and an empirical coverage of $95.0\%$. The skewness is positive in both cases and larger for the Walnut Gulch, while the empirical coverage matches the nominal level. This indicates that the covariance based intervals reproduce the MC envelopes, with the residual asymmetry concentrated in the upper tail of the WGEW event.}

\comment{We now compute the empirical coverage of MC realizations of the outlet discharge that fall within the $95\%$ confidence interval $\mu(t)\pm 1.96\,\sigma(t)$ over the rising, peak, and recession phases of the hydrograph, defined relative to the nominal peak time $t_{\mr{peak}}$ and peak discharge $Q_{\mr{peak}}$: rising ($t<t_{\mr{peak}}$, $Q_{\mr{mean}}(t)<0.9\,Q_{\mr{peak}}$), peak ($Q_{\mr{mean}}(t)\geq 0.9\,Q_{\mr{peak}}$), and recession ($t>t_{\mr{peak}}$, $Q_{\mr{mean}}(t)<0.9\,Q_{\mr{peak}}$). For V-Tilted, the coverages are $96.1\%$ (rising), $95.3\%$ (peak), $96.5\%$ (recession), and $96.1\%$ overall; for Walnut Gulch, $95.6\%$ (rising), $94.3\%$ (peak), and $95.5\%$ overall, with the recession coinciding with the end of the simulation window. All values remain within one percentage point of the nominal $95\%$. The lower coverage at the Walnut Gulch peak ($94.3\%$ versus $95.3\%$ at V-Tilted) is consistent with the higher skewness and excess kurtosis of the ensemble and quantifies the linearization bias that a symmetric interval imparts on a non-Gaussian peak distribution. At ungauged cells, the spatial standard deviation comparison in Fig.~\ref{fig:mc_validation_vtilted}(b) and Fig.~\ref{fig:mc_validation_walnut_gulch}(b) provides the corresponding validation. We further evaluate the coverage at V-Tilted cells near the outlet, the considered gauged flumes, and two ungauged interior cells in Walnut Gulch; across all locations, the coverage stays within two and a half percentage points of the nominal $95\%$ on every phase within the simulation window (V-Tilted: $95.5$-$97.2\%$; Walnut Gulch ungauged: $92.6$-$97.9\%$; gauged flumes: $92.7$-$97.0\%$). The peak decrease at the ungauged cells ($92.6$-$92.8\%$) and at FL002 ($92.7\%$) integrates along the drainage network into the $0.7$ percentage point decrease at the outlet ($94.3\%$), consistent with the heavier tailed peak distribution.}

\begin{figure}[t]
\centering
\includegraphics[keepaspectratio=true,width=\columnwidth]{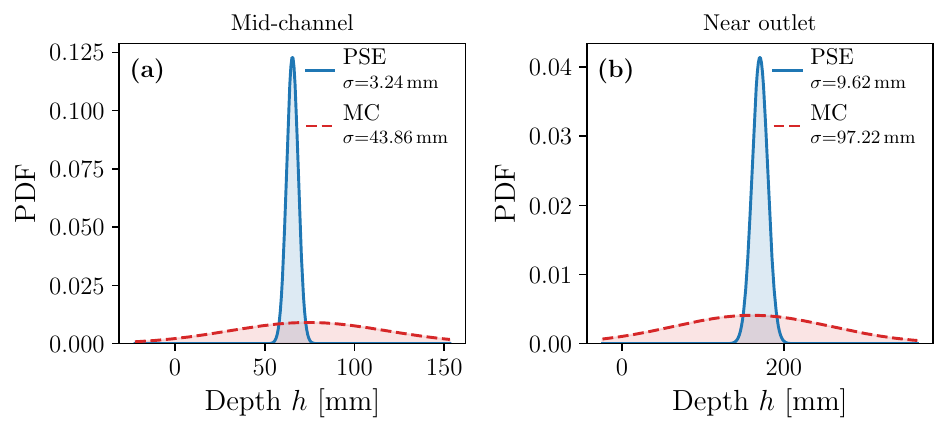}
\caption{Probability densities for depth $h$ in the V-Tilted catchment from covariance propagation and MC distributions from the CA routing; (a) and (b) correspond to two channel locations (mid channel and near outlet).}\label{fig:cell_distributions_vtilted}
\end{figure}

Based on the above results, we assess the resulting probability distribution bounds for gauged and ungauged cells. Fig.~\ref{fig:cell_distributions_vtilted} and Fig.~\ref{fig:cell_distributions_walnut_gulch} provide an additional spatial interpretation regarding the uncertainty response at measured and unmeasured locations across the two watersheds. For V-Tilted, at the two selected ungauged locations within the channel, the proposed covariance propagation gives depth standard deviations of $3.24\;\mr{mm}$ (mid channel) and $9.62\;\mr{mm}$ (near outlet), while the corresponding MC values are $43.86\;\mr{mm}$ and $97.22\;\mr{mm}$. For Walnut Gulch, at the two gauged locations (main outlet and main channel), the proposed covariance propagation gives $1.83\;\mr{mm}$ and $4.04\;\mr{mm}$, while MC gives approximately $15\;\mr{mm}$ and $60\;\mr{mm}$. For the ungauged locations, the corresponding distribution spreads are shown in Fig.~\ref{fig:cell_distributions_walnut_gulch} and remain larger than the gauged cases. In both case studies, the covariance based probability densities at measured locations are narrower than those at unmeasured locations, consistent with the measurement conditioning (overdetermined) discussed in Sections~\ref{subsubsec:cov_prop} and~\ref{subsec:results_uncertainty}. The additional $n_y$ measurement equations appended to the $n_x$ DAE equations in the stacked system~\eqref{eq:Ak_structure} constrain the covariance solution in~\eqref{eq:Kxx}, reducing the posterior variance at gauged cells while the coupled DAE dynamics propagate this information to all the state variables through the DAE state space representation~\eqref{eq:dae_system}.

\begin{figure}[t]
\centering
\includegraphics[keepaspectratio=true,width=\columnwidth]{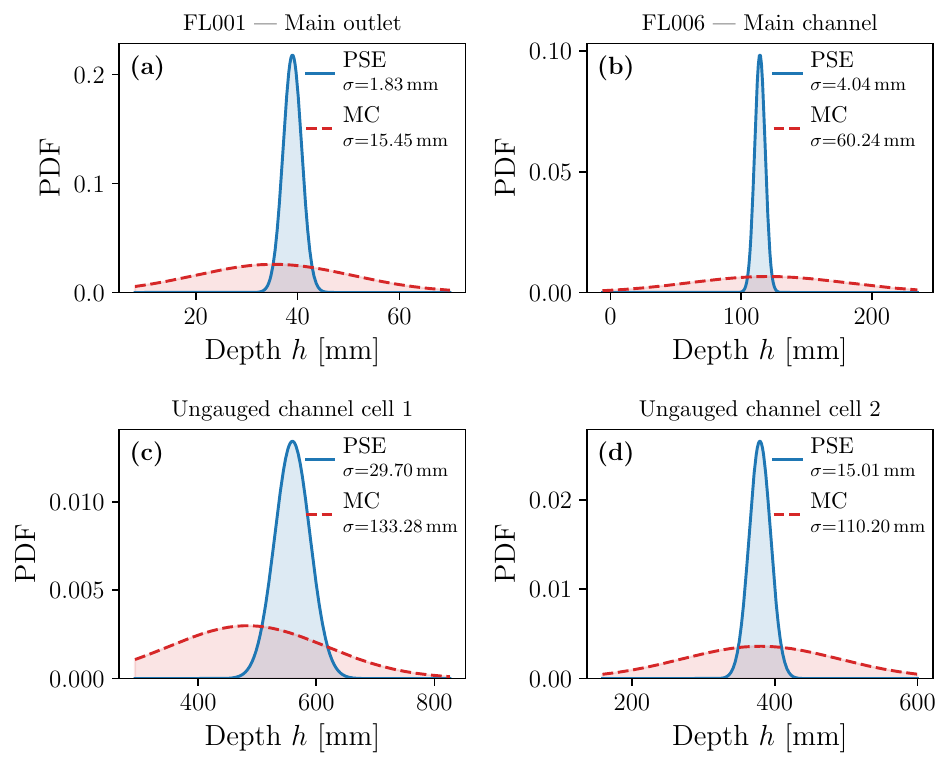}
\caption{Probability densities for depth $h$ in Walnut Gulch from covariance propagation and MC distributions from the CA routing; (a) and (b) correspond to the two gauged locations (main outlet and main channel), while (c) and (d) correspond to two ungauged locations.}\label{fig:cell_distributions_walnut_gulch}
\end{figure}

With that in mind, the covariance propagation method reproduces the main ensemble uncertainty patterns while avoiding repeated model evaluations. The MC simulations are executed in MATLAB using 10 parallel processes. Under this setup, the MC runtimes for $N_{\mr{MC}}=100$ realizations are $116.29\;\mr{s}$ (CA routing) and $394.23\;\mr{s}$ (semi-implicit) for V-Tilted, and $1668.22\;\mr{s}$ (CA routing) for Walnut Gulch. Compared to the covariance propagation runtimes reported in Section~\ref{subsec:results_uncertainty}, the proposed framework is approximately ten times faster than the MC simulations for both watersheds, while requiring only a single model run to obtain the nominal trajectory $\m x_k^0$ and the covariance mapping in~\eqref{eq:Kxx}. From a practical perspective, these results indicate that the proposed framework can provide confidence aware watershed state estimates at gauged and ungauged locations in realtime, which is relevant for operational scenarios where ensemble based uncertainty quantification exceeds the available forecast window and where ungauged locations require reliable uncertainty bounds for decision support in flood warning and water resources management. This concludes the results section.

\section{Conclusion, paper limitations, and recommendations for future work}\label{sec:summary}
This paper proposes a dynamical systems framework for realtime probabilistic monitoring of watershed hydrologic-hydrodynamic dynamics under partial measurements. In contrast to existing ensemble based uncertainty methods in the watershed literature~\cite{Moges2020,Nguyen2024a,Gomes2025,Castro2025}, the proposed approach computes confidence intervals for measured and unmeasured states from a single nominal simulation by using a closed-form covariance mapping that is distribution-agnostic, requiring only means and variances of uncertain inputs and parameters rather than a prescribed distributional form. The coupled overland flow routing and Green-Ampt infiltration dynamics are formulated as a semi-explicit nonlinear DAE, which provides a state space representation suitable for uncertainty propagation and estimation analysis. Based on this representation, the covariance is computed through the stacked linearized DAE and measurement model and evaluated using the closed-form mapping in Proposition~\ref{thm:pse_solution}. The case studies on V-Tilted and Walnut Gulch show that the proposed method remains consistent with baseline solvers and MC uncertainty references, while incorporating sensing information under partial gauging to reduce covariance at measured locations. Furthermore, the method provides computationally tractable uncertainty estimates in large scale domains, thereby enabling realtime confidence aware monitoring for operational forecasting and decision support in flood risk management and water resources applications. The paper is not without limitations and merits future work. The limitations and future work are summarized as follows.
\begin{itemize}[leftmargin=*]
\item The current framework considers parameter uncertainty but does not address structural model uncertainty, such as the choice of infiltration model or routing scheme. \comment{In particular, the Green-Ampt infiltration model adopted in~\eqref{eq:green_ampt_capacity}-\eqref{eq:infiltration_rate} approximates Richards' equation~\cite{richards1931capillary} and does not capture layered soil profiles, macropore and preferential flow, wetting and drying hysteresis, soil crusting~\cite{Becker2018}, or lateral subsurface flow; combined with the spatially uniform initial moisture $\theta_i$ used in the Walnut Gulch case study, these are additional uncertainty sources not captured by the propagated intervals. Similarly, the current framework adopts the diffusive wave approximation through the algebraic constraint $\m g(\m x_d,\m x_a)=\m 0$ in Section~\ref{subsec:dae}; extension to include inertial terms such as the local inertia approximation of~\cite{Gomes2023,Cozzolino2019} or the full dynamic wave requires rederiving the state space DAE; the covariance propagation in Algorithm~\ref{algorithm1} carries over to the reformulated system.} Future work can investigate approaches that account for model structure uncertainty\comment{, along with spatially variable $\theta_i$ in the watershed conditions}.\vspace{-0.3cm}
\item \comment{Assumption~\ref{asmp:independence} is a practical simplification. The covariance mapping in~\eqref{eq:Kxx} admits a spatial covariance matrix on $R$, $n$, $K_s$, $\psi_f$, and $\eta$ through the off diagonal blocks of $\mr{K}_{\theta}(k)$; quantifying how spatial correlation in $K_s$ and rainfall reshapes the propagated confidence intervals, jointly with a spatially variable $\theta_i$, is a direction for future work. Furthermore, the covariance mapping propagates the means and variances of the uncertain inputs through the linearized DAE; the skewness and excess kurtosis of the output distribution at the peak of extreme events are therefore not produced by the dynamics themselves, and any asymmetry in the confidence intervals is inherited from the adopted distribution model in Tab.~\ref{tab:pse_uncertainty}.}\vspace{-0.3cm}
\item The computational cost of assembling and inverting the system matrix $\m A[k]$ grows with the number of cells\comment{; for different grid resolutions or larger basin applications, the proposed framework may require memory that exceeds practical limits for the accumulated covariance blocks, and capturing spatial coupling through a global sparse assembly of $\m A[k]$ would increase the memory requirements}. Reduced order modeling techniques\comment{, such as Karhunen-Lo\`eve deep learning surrogates~\cite{Wang2025b},} or domain decomposition strategies\comment{~\cite{Her2021}, along with data driven surrogate approaches~\cite{Nguyen2024a} for approximating the Jacobian or accelerating the covariance computation,} can be explored to extend the applicability to larger watersheds.\vspace{-0.3cm}
\item The integration of the covariance framework with realtime data assimilation is not explored in this paper. Future research can investigate how to assimilate in-situ observations to update the state estimates and reduce uncertainty over time; the problem of optimally allocating sensing resources to minimize uncertainty in watershed conditions can be solved using established control theoretic frameworks as in~\cite{Kazma2024e}. The proposed framework is validated under event scale simulations with fixed spatial discretization; extension to adaptive mesh refinement, longer continuous simulations with spatially varying rainfall can be explored in future work.
\end{itemize}

Future research on this topic can investigate one or more of the above limitations of this paper.\\

\appendix
\section{Main Derivations}\label{apndx:proofs}
\subsection{Index of the proposed DAE}\label{apndx:index_dae}
\begin{proof}[Proof of Proposition~\ref{thm:index1_local}]
From~\eqref{eq:algebraic_constraint} and for a fixed terrain and roughness parameters $(\m z,\m n)$, we define the stacked directional Manning operator $\m \phi(\m h,\m z,\m n) := [\m \phi_{\mr{L}}^\top,\m \phi_{\mr{R}}^\top,\m \phi_{\mr{U}}^\top,\m \phi_{\mr{D}}^\top]^\top \in \Rn{n_a}$. Then, the stacked algebraic constraint consistent with~\eqref{eq:semi_NDAE_rep} can be written as
\[
\m g(\m x_d,\m x_a) = \m x_a - \m \phi(\m h,\m z,\m n) = \m 0,
\]
where $\m x_a = [\m Q_{\mr{L}}^\top,\m Q_{\mr{R}}^\top,\m Q_{\mr{U}}^\top,\m Q_{\mr{D}}^\top]^\top$ follows from~\eqref{eq:state_partition}. The Jacobian with respect to the algebraic states $\m x_a$ can be computed as $\frac{\partial \m g}{\partial \m x_a} = \m I_{n_a}$, which is nonsingular; this is consistent with Assumption~\ref{asmp:dae_regular}. Then, by the implicit function theorem, the algebraic states are locally and uniquely determined. Differentiating $\m g = \m 0$ once with respect to time eliminates the algebraic variables in the semi-explicit form as follows
\[
\dot{\m x}_a = \frac{\partial \m \phi}{\partial \m h} \dot{\m h}.
\]
This relation gives an explicit differential equation for computing the algebraic state variables. Therefore, one differentiation is sufficient to obtain an ODE representation for all state components, and the differentiation index is one~\cite{Volker2005,Linh2009}. Moreover, nonsingularity of $\partial \m g / \partial \m x_a$ implies local regularity of the linearized descriptor system around regular operating points~\cite{Volker2005}. This concludes the proof.
\end{proof}
\vspace{-0.1cm}
\subsection{Directional Manning Algebraic Map}\label{apndx:manning_map}
For completeness, the directional Manning CA algebraic map used in the DAE formulation can be written accordingly. For each cell $(i,j)$, define the water surface elevation as follows
\begin{equation}
\zeta^{i,j}=z^{i,j}+h^{i,j},
\end{equation}
then, for each direction $d\in\mc{D}_{\mr{dir}}$,
\begin{equation}
S_d^{i,j}=\max\left(\frac{\zeta^{i,j}-\zeta_d^{i,j}}{\Delta s_d},\,0\right),
\end{equation}
where $S_d^{i,j}$ is the water surface slope toward neighbor $d$ [m m$^{-1}$], $\zeta_d^{i,j}$ denotes the neighboring water surface elevation in direction $d$ [m], and $\Delta s_d$ is the directional cell distance [m]. If the neighboring cell is unavailable at the boundary, $\zeta_d^{i,j}$ is extrapolated using a free drainage boundary slope. Let
\begin{equation}
S_{\max}^{i,j}=\max_{d\in\mc{D}_{\mr{dir}}} S_d^{i,j},\qquad
S_{\mr{tot}}^{i,j}=\sum_{d\in\mc{D}_{\mr{dir}}}S_d^{i,j}.
\end{equation}
Under the admissible flow directions introduced in Section~\ref{subsubsec:manning}, only neighbors with nonnegative head difference are retained. Thus, $\Delta\eta_d^{i,j}=\zeta^{i,j}-\zeta_d^{i,j}\ge 0$, and the directional partition is written as
\[
w_d^{i,j}:=\frac{S_d^{i,j}}{S_{\mr{tot}}^{i,j}},\qquad S_{\mr{tot}}^{i,j}>0,\qquad
\sum_{d\in\mc{D}_{\mr{dir}}}w_d^{i,j}=1,
\]
which is consistent with~\eqref{eq:manning}.
Hence, for $S_{\mr{tot}}^{i,j}>0$, the directional discharges are
\begin{equation}
Q_d^{i,j}=\mathrm{ff}^{i,j}\,v^{i,j}\,h_{\mr{eff}}^{i,j}\,\ell_d\,
\frac{S_d^{i,j}}{S_{\mr{tot}}^{i,j}},\qquad d\in\mc{D}_{\mr{dir}},
\end{equation}
where the regularized terms $h_{\mr{reg}}^{i,j}$, $h_{\mr{eff}}^{i,j}$, $\mathrm{ff}^{i,j}$, and $v^{i,j}$ follow~\ref{apndx:regularization_terms}. Note that if $S_{\mr{tot}}^{i,j}=0$, then $Q_d^{i,j}=0$ for all $d\in\mc{D}_{\mr{dir}}$. The corresponding algebraic constraints in the DAE are
\begin{equation}
0=Q_d^{i,j}-\phi_d^{i,j}\left(h^{i,j},h_d^{i,j},z^{i,j},z_d^{i,j},n^{i,j}\right),\qquad d\in\mc{D}_{\mr{dir}}.
\end{equation}
Here, $\phi_d^{i,j}(\cdot)$ denotes the cellwise component of the global mapping $\m\phi_d(\m h,\m z,\m n)$ in~\eqref{eq:algebraic_constraint}.

\subsection{Proof of the closed-form covariance proposition}\label{apndx:kxx_solution}
\begin{proof}[Proof of Proposition~\ref{thm:pse_solution}]
From~\eqref{eq:cov_relation_linear}, the covariance constraint in Problem~\ref{prob:pse_cov_linear_form} is written as
\[
\m A[k]\,\mr{K}_{\m x\m x}[k]\,\m A[k]^\top=\mr{K}_{\m b\m b}[k].
\]
Under Assumption~\ref{asmp:dae_regular}, $\m A[k]$ has full column rank, so $\m A[k]^\top\m A[k]$ is nonsingular and
\[
(\m A[k])^{\dagger}=\left(\m A[k]^\top\m A[k]\right)^{-1}\m A[k]^\top.
\]
Premultiplying by $(\m A[k])^{\dagger}$ and postmultiplying by $((\m A[k])^{\dagger})^\top$ gives
\[
\mr{K}_{\m x\m x}[k]=(\m A[k])^{\dagger}\mr{K}_{\m b\m b}[k]((\m A[k])^{\dagger})^\top,
\]
which is equivalent to~\eqref{eq:Kxx}. This concludes the proof.
\end{proof}
\vspace{-0.1cm}

\section{Regularization Terms}\label{apndx:regularization_terms}
The DAE formulation uses smooth regularization terms to improve numerical robustness on steep terrain. For compactness, cell indices are omitted, and each expression is applied cellwise.
\begin{equation}
h_{\mr{reg}}=\sqrt{h^2+h_{\mr{smooth}}^2},\qquad h_{\mr{eff}}=\max(h_{\mr{reg}}-h_0,\;h_{\mr{smooth}}),
\end{equation}
where $h_{\mr{smooth}}>0$ is a small smoothing depth and $h_0\ge 0$ is the ponding threshold.
\begin{equation}
\mathrm{ff}(h)=\xi^2(3-2\xi),\qquad \xi=\min\left(1,\frac{h}{h_{\mr{flow}}}\right),
\end{equation}
where $h_{\mr{flow}}>0$ is the flow activation depth.
\begin{equation}
v_{\mr{Manning}}=\frac{1}{n}h_{\mr{eff}}^{2/3}\sqrt{S_{\max}},\qquad v=\min(v_{\mr{Manning}},v_{\max}).
\end{equation}
Here, $S_{\max}\ge 0$ is the local maximum water surface slope and $v_{\max}>0$ is a velocity cap used for numerical robustness.

For Green-Ampt infiltration, a smooth minimum is used between capacity and available supply. In the BDF formulation, the supply term is evaluated using previous step ponding depth and is written as
\begin{equation}
f_{\mr{cap}}=K_s\left(1+\frac{\Delta\theta\,(h+\psi_f)}{\max(F,F_{\min})}\right),\qquad
i_a=\frac{h_{k}}{\Delta t}+R_{k+1},
\end{equation}
\begin{equation}
f=\frac{1}{2}\left(f_{\mr{cap}}+i_a-\sqrt{(f_{\mr{cap}}-i_a)^2+\varepsilon_s}\right).
\end{equation}
The parameter $\varepsilon_s>0$ controls the smooth-min regularization. Here, $\Delta\theta$ is the Green-Ampt moisture deficit term, denoted by $\eta$ in Section~\ref{subsubsec:uncertainty_sources}. This smoothing avoids discontinuities of a hard \textit{min} operator and improves convergence of the nonlinear Newton-Raphson iteration.

\balance
\vspace{0.1cm}
 \bibliographystyle{elsarticle-harv}
\vspace{0.1cm}
\bibliography{library_awr2026}
\end{document}